\newtheorem{theorem}{Theorem}[section]
\newtheorem{lemma}[theorem]{Lemma}
\newtheorem{definition}[theorem]{Definition}
\newcommand{\FullOrShort}{full}
	  \newcommand{\fullOnly}[1]{#1}
	  \newcommand{\shortOnly}[1]{}
	 \newcommand{\fullOnly}[1]{}
	 \newcommand{\shortOnly}[1]{#1}
\begin{document}

\title{Lower Bounds for Structuring Unreliable Radio Networks\thanks{This work
is supported in part by NSF grant number CCF 1320279 and the Ford Motor Company University Research Program. An extended abstract also appears in the 2014
proceedings of the International Symposium on Distributed Computing (DISC).}}
\author{Calvin Newport\\ Georgetown University\\ {\tt cnewport@cs.georgetown.edu}}
\date{}

\maketitle

%\author{Mohsen Ghaffari\\MIT CSAIL\\{\tt \small ghaffari@csail.mit.edu} \and Nancy Lynch\\ MIT CSAIL\\ {\tt \small lynch@csail.mit.edu}
 %\and Calvin Newport\thanks{Contact author. Can also be reached at: (202) 687-5082.}\\ Georgetown University\\ {\tt \small cnewport@cs.georgetown.edu}}
%\date{}	

\begin{abstract}
In this paper, we study lower bounds for randomized solutions to the maximal independent set (MIS) and connected
dominating set (CDS) problems in the dual graph model of radio networks---a generalization 
of the standard graph-based model that now includes unreliable links controlled by an adversary.
We begin by proving that a natural geographic constraint on the network topology is required
to solve these problems efficiently (i.e., in time polylogarthmic in the network size).
In more detail,  we prove that in the absence of this constraint,
for a network of size $n$: every MIS algorithm now requires 
%$\Omega(n^{1-\epsilon}/\log{n})$ 
$\Omega(n^{1-\epsilon})$
rounds to solve the problem,
for any constant $\epsilon, 0 < \epsilon \leq 1$, and every CDS algorithm that provides a reasonable approximation
of a minimum CDS now requires $\Omega(\sqrt{n}/\log{n})$ rounds.
We then prove the importance of the assumption that nodes are provided advance knowledge
of their reliable neighbors (i.e, neighbors connected by reliable links).
In more detail, we prove that in the absence of this assumption, for any CDS algorithm that 
guarentees a $g(n)$-approximation of a minimum CDS in $f(n)$ rounds,
it follows that $g(n) + f(n) = \Omega(n)$. This holds even if we assume the geographic constraint 
and the weakest possible adversary controlling the unreliable links.
Finally, we show that although you can efficiently build an MIS without advance neighborhood
knowledge, this omission increases the problem's dependence on the geographic constraint.
When both constraints are missing, every MIS algorithm now requires $\Omega(n)$ rounds,
even if we assume the weakest possible adversary.
Combined, these results answer an open question by proving that the efficient MIS and CDS 
algorithms from~\cite{censor:2011} are optimal with respect
to their dual graph model assumptions. They also provide insight into what 
properties of an unreliable network enable efficient local computation.
 \end{abstract}

% A category with the (minimum) three required fields

%\vspace{4cm}

%\noindent {\em Regular submission.}
%%%%%%%%%%%%%%%%
\setcounter{page}{0}
 \thispagestyle{empty}
%%%%%%%%%%%%%%%%
\newpage
%%%%%%%%%%%%%%%%

%\section{}
%\subsection{}

\section{Introduction}
\label{sec:intro}

This paper proves four new lower bounds on the maximal independent set (MIS) and connected dominating set (CDS) 
problems in radio networks with unreliable links.
These bounds establish the necessary model assumptions for building structures efficiently in this
dynamic setting.
In doing so, they also prove
that the MIS and CDS
algorithms of~\cite{censor:2011} are optimal with respect to their assumptions.
As emphasized in previous studies (e.g.,~\cite{kuhn:2004,moscibroda:2005,censor:2011}),
these two problems are important in the radio setting as they provide clusterings and routing backbones, respectively,
both of which are useful to higher-level applications.

\noindent {\bf The Dual Graph Model.}
Theoreticians have studied distributed algorithms in radio network models since the 1980s~\cite{baryehuda:1987}.
Most of this existing work assumes static models in which receive behavior depends on a fixed set of deterministic rules.
This property is true,
for example,
of both the popular graph-based~\cite{chlamtac:1985,baryehuda:1987}
and signal-strength~\cite{moscibroda:2006} models.
We argue that it is important, however, to also study radio network models that are more dynamic and less predictable.
This type of model uncertainty can abstract the complex behavior observed
in real wireless networks~\cite{newport:2007,betafactor},
and therefore improve the likelihood that properties proved in the theory setting will remain satisfied in a practical deployment.
Our call for dynamic radio network models, in other words,
is an attempt to help close the gap between theory and practice.

In a recent series of papers motivated by this argument,
we study
distributed
computation in a dynamic radio network environment
 that we call the {\em dual graph} model~\cite{kuhn:2009,kuhn:2010b,censor:2011,ghaffari:2012,ghaffari:2013}.
This model generalizes
the well-studied graph-based models~\cite{chlamtac:1985,baryehuda:1987}
to now include {two} topology graphs.
The first graph captures {\em reliable} links that are present in every round of the computation\footnote{Notice, a more general 
approach to modeling unreliability would be to assume a single graph that
changes from round to round. The dual graph model assumes the same reliable sub-graph is present in each each round
because it enables more natural and simple definitions of standard problems; e.g., to define {\em broadcast},
we can simply say the message gets to all nodes connected to the source in the reliable sub-graph,
and to define structuring algorithms, we can require that the structures to be correct with respect to this sub-graph---definitions
that are complicated without this stability.}
and the second captures {\em unreliable} links that come and go as determined by a (bounded) adversary.
The collision rules in each round are the same as in the standard graph-based models.

%Much of our work in the dual graph model focuses on the basic problem of broadcasting information.
%In~\cite{kuhn:2009,kuhn:2010b,ghaffari:2012} we proved the impossibility of efficient local and global
%broadcast in the dual graph model.
%In~\cite{ghaffari:2013}, however, we subsequently proved that these negative results depended on the presence
%of a strong adversary controlling the links and no constraint on the topology.
%We showed that by strengthening these assumptions slightly,  local and global
%broadcast  {\em can} be solved efficiently.
%
%
%\paragraph{Results.}
%The results in~\cite{ghaffari:2013} are important 
%because they indicate that efficient computation is possible in the dual graph model,
%but that achieving these results requires the careful investigation of the impact
%of model assumptions.  In this paper, we apply this approach to the MIS and CDS problems,
%which, like global and local broadcast, are well-motivated and well-studied in the wireless setting.
%Our goal is to understand the assumptions required to efficiently solve these problems
%in the dual graph model.

\noindent {\bf Results.}
In previous work~\cite{censor:2011},
we studied the MIS and CDS problems in the dual graph
model with an adaptive adversary and the following two strong assumptions:
(1) a natural geographic constraint holds with respect to the dual graphs (see Section~\ref{sec:model});
and (2) the nodes are provided the ids of their reliable neighbors (i.e., neighbors in the reliable link graph) at the beginning of the execution.
Under these assumptions,
we described randomized MIS and CDS algorithms that are efficient, which we define in the following to mean time polylogarthmic in the network size.
Furthermore, the CDS algorithm guarantees a structure that is a constant-approximation
of a minimum CDS in the network.
We note that in the standard graph-based model {\em without} unreliable links,
the best known solutions to these problems are also polylogarthmic~\cite{moscibroda:2005}, 
indicating that the above assumptions enable algorithms to minimize the impact of unreliability.

{\em In this paper, we explore the necessity of these two assumptions.}
We begin by proving that the geographic constraint is required to efficiently build an MIS or CDS in the dual graph model.
In more detail, in Section~\ref{sec:active:mis} we prove that without this assumption,
every randomized MIS algorithm now requires $\Omega(n^{1-\epsilon}/\log{n})$ rounds to solve the problem in a network of size $n$,
for any constant $\epsilon, 0 < \epsilon \leq 1$.
We then prove, in Section~\ref{sec:active:cds},
that any randomized CDS algorithm that guarantees to provide at least a $o(\sqrt{n})$-approximation of the minimum CDS
now requires $\Omega(\sqrt{n}/\log{n})$ rounds to solve the problem.
In both cases, these results hold even when we weaken the adversary from the offline adaptive adversary assumed in~\cite{censor:2011} 
(which knows the nodes' random bits)
to the weaker online adaptive adversary (which does not know these bits).
Note that
these lower bounds are exponentially worse than what is possible with the geographic constraint---underscoring its importance.

To prove our MIS lower bound, we show that any algorithm that works efficiently in this setting must work in a ring with a (non-geographic) unreliable link
topology that allows a clever adversary to prevent many segments of the ring from receiving any messages. The nodes in these isolated segments must then make
an MIS decision based only on their id and the ids of their neighbors (which, by assumption, they are provided). 
By repurposing a key combinatorial result due to Linial~\cite{linial:1992}, we are able to show that for a particular
method of assigning ids to the ring, it is likely that some isolated segments will make mistakes.
To prove the CDS result, we use simulations of the algorithm in question to carefully build a challenging (non-geographic) dual graph
network and id assignment in which it is likely that either the CDS is too large (leading to a bad approximation) or is not connected (violating correctness).

We proceed by exploring the necessity of the second assumption which provides nodes advance knowledge
of their reliable neighbors.
We emphasize that for structuring problems, 
nodes need {\em some} way to distinguish reliable links from unreliable links, as the problem
definitions require that the structures be correct with respect to the reliable link graph (see Section~\ref{sec:model}).
They do not, however, 
 necessarily require advance knowledge of their reliable neighbors. 
With this in mind, we study what happens when we replace this advance knowledge
assumption with a {\em passive} alternative
that simply labels messages received from a reliable neighbor as reliable---leaving it up to the algorithm
to discover these nodes.

We prove in Section~\ref{sec:passive:cds} that the advance knowledge assumption is necessary to
efficiently solve the CDS problem.
%
%we turn our attention to the CDS problem,
%and prove that active neighborhood knowledge {\em is} fundamental to solve this problem---identifying a gap between
%these two problems in this setting.
In more detail, we prove that with a geographic constraint, the weakest possible adversary (a static adversary that never changes
the unreliable links it includes), but only passive neighborhood knowledge,
for any randomized CDS algorithm that guarantees to construct a $g(n)$-approximation of the minimum CDS in $f(n)$ rounds,
it follows that 
$g(n) + f(n) = \Omega(n)$. 
We then turn our attention to the MIS problem.
In Section~\ref{sec:passive:mis}, 
we first show that the MIS solution from~\cite{censor:2011} still works with passive knowledge---identifying a gap with respect
to the CDS problem.
We then prove, however, that the switch to passive knowledge {\em increases} the dependence of any MIS solution
on the assumption of a geographic constraint.
In particular, we prove that with the passive neighborhood knowledge, the static adversary, and no geographic constraint,
every randomized MIS algorithm now requires $\Omega(n)$ rounds to solve the problem.

In both bounds,
we rely on a static adversary that
 adds unreliable links between
all nodes in all rounds. 
In such a network, only one node can successfully send a message in a given round (if any two
nodes send, there will be a collision everywhere), but any successful message will be received by all nodes in the network.
The key to the arguments is the insight that a received message is only useful if it comes from a reliable neighbor,
and therefore, in each round, at most a small fraction of the network receives useful information.
If we run the algorithm for a sufficiently small number of rounds,
a significant fraction of nodes will end up making an MIS or CDS decision without {\em any}
knowledge of their reliable neighborhood (as they did not receive any useful messages and we assume
no advance knowledge of reliable neighbors).
Our bounds use reductions from hard guessing games to careful network constructions to prove
that many nodes are subsequently likely to guess wrong. 

%as the assumption of advance knowledge allows nodes to learn about their local neighborhood even without receiving a message.
%This reality is what necessitates the increased complexity and slightly weaker results in our first two bounds as compared to the final two.

\noindent {\bf Implications.}
In addition to proving the algorithms from~\cite{censor:2011} optimal, 
our lower bounds provide interesting general insight into what enables efficient local computation
in an unreliable environment.
They show us, for example,  that geographic network topology constraints are crucial---without such constraints, the MIS and CDS problems
cannot be solved efficiently in the dual graph model, even with strong assumptions about neighborhood knowledge.
%This result is perhaps unexpected. 
Though existing structuring results in other radio network models all tend to use similar constraints (e.g.,~\cite{kuhn:2004,moscibroda:2005}),
to the best of our knowledge this is the first time they are shown to be {\em necessary} in a radio setting.
Our lower bounds also identify an interesting split between the MIS and CDS problems, which are typically understood to
be similar (building an MIS is often a key subroutine in CDS algorithms).
In particular, the MIS problem can still be solved efficiently with passive neighborhood knowledge,
but the CDS problem cannot. Our intuition for this divide, as highlighted by the details of our proof argument (see Section~\ref{sec:passive:cds}),
is that a CDS's requirement for reliable connectivity necessitates, in the absence of advance neighborhood knowledge,
a sometimes laborious search
through a thicket of unreliable links to find the small number of reliable connections needed for correctness.

\noindent {\bf Related Work.}
The dual graph model was introduced independently by Clementi et~al.~\cite{clementi:2004} and Kuhn et~al.~\cite{kuhn:2009},
and has since been well-studied~\cite{kuhn:2010b,censor:2011,ghaffari:2012,ghaffari:2013}.
In~\cite{censor:2011}, we presented an MIS and CDS algorithm that both require $O(\log^3{n})$ rounds,
for a network of size $n$ and the strong assumptions described above.
It was also shown in~\cite{censor:2011}, that an efficient CDS solution is {\em impossible} if provided
imperfect advance neighborhood knowledge (i.e., a list of reliable neighbors that can contain a small number of mistakes).
In the classical graph-based radio network model~\cite{chlamtac:1985,baryehuda:1992},
which does not include unreliable edges,
the best known MIS algorithm requires $O(\log^2{n})$ rounds~\cite{moscibroda:2005}  (which is tight~\cite{jurdzinski:2002,colton:2006,daum:2012}),
 and assumes a similar geographic constraints as in~\cite{censor:2011} (and which we prove necessary in the dual graph model
 in this paper). Strategies for efficiently building a CDS once you have an MIS in the classical radio network model (with geographic constraints)
 are well-known in folklore. It is sufficient, for example, to simply connect all MIS nodes within $3$ hops,
 which can be accomplished in this setting in $O(\log^2{n})$ rounds with a bounded randomized flood (see~\cite{censor:2011} for more discussion).
 Finally, we note that the dual graph model combined with the geographic constraint defined below is similar
 to the quasi-unit disk graph model~\cite{kuhn:2003}.
 The key difference, however, is that the dual graph model allows the set of unreliable links selected to
 change from round to round.

\section{Model \& Problems}
\label{sec:model}

The {\em dual graph} model
describes a synchronous multihop radio network with both reliable and unreliable links.
In more detail,
the model describes the network topology with two graphs on the same vertex
set: $G=(V,E)$ and $G'=(V,E')$, where $E\subseteq E'$.
The $n=|V|$ nodes in $V$ correspond to the wireless devices and the edges describe links.
An {\em algorithm} in this model consists of $n$ randomized {\em processes}.
An execution of an algorithm in a given network $(G,G')$
begins with an adversary assigning each process to a node in the graph.
This assignment is unknown to the processes.
To simplify notation, 
we use the terminology {\em node $u$}, with respect to an execution and vertex $u$,
to refer to the process assigned to node $u$ in the graph in the execution.
Executions then proceed in synchronous 
rounds.
In each round $r$, each node decides whether to transmit a message or receive based
on its randomized process definition. 
The communication topology in this round is described
by the edges in $E$ (which we call the {\em reliable} links)
plus {\em some subset} (potentially empty) of the edges in $E'\setminus E$ (which we call the {\em unreliable} links). 
This subset, which can change from round to round,
is determined by a bounded adversary (see below for the adversary bounds we consider). 

Once a topology is fixed for a given round, we use the standard communication rules for graph-based radio network models.
That is, we say
 a node $u$ receives a message $m$ from node $v$ in round $r$,
if and only if: (1) node $u$ is receiving; (2) node $v$ is transmitting $m$; and (3) $v$ is the only node transmitting among the neighbors
of $u$ in the communication topology {\em fixed by the adversary for $r$}. 
Notice, the dual graph model is a strict generalization of the classical graph-based radio network model
(they are equivalent when $G=G'$).

\noindent {\bf Network Assumptions.}
To achieve the strongest possible lower bounds,
we assume nodes are assigned unique ids from $[n]$
(where we define $[i]$, for any integer $i>0$, to be the sequence $\{1,2,...,i\}$).
Structuring algorithms often require constraints on the network topology.
In this paper, 
we say a dual graph $(G,G')$ satisfies the {\em geographic constraint},
if there exists some constant $\gamma \geq 1$,
such that we can embed the nodes in our graph in a Euclidean plane with distance function $d$,
and $\forall u,v\in V$, $u\neq v$: if $d(u,v) \leq 1$ then $(u,v)$ is in $G$, and if $d(u,v) > \gamma$, $(u,v)$ is not
in $G'$. This constraint says that close nodes can communicate, far away nodes cannot,
and for nodes in the {\em grey zone} in between, the behavior is controlled by the adversary.

We consider two assumptions about nodes' knowledge regarding the dual graph.
The {\em advance} neighborhood knowledge assumption provides every node $u$,
at the beginning of the execution,
the ids of its neighbors in $G$ (which we also call $u$'s {\em reliable} neighbors). 
This assumption is motivated by the real world practice of providing wireless algorithms
a low-level neighbor discovery service.
The {\em passive} neighborhood knowledge assumption, by contrast, labels received messages at a node $u$
with a ``reliable" tag if and only if the message was sent by a reliable neighbor.
This assumption is motivated wireless cards' ability to measure the signal quality
of received packets.

\noindent {\bf Adversary Assumptions}
There are different assumptions that can be used to bound the adversary that decides in the dual graph
model which edges from $E'\setminus E$ to include in the communication topology in each round.
Following the classical definitions of adversaries in randomized analysis,
in this paper we consider the following three types:
(1) the {\em offline adaptive} adversary, which when making a decision on which links
	   to include in a given round $r$,  can use knowledge of the network topology, the algorithm being executed, the
	     execution history through round $r-1$, and the nodes' random choices for round $r$;
	     (2) the {\em online adaptive} adversary, which is a weaker version of the offline adaptive variant
	      that no longer learns
	       the nodes' random choices in $r$ before it makes its link decisions for $r$; and
	(3) a {\em static} adversary, which includes the same set of unreliable links in every round.
In this paper, when we refer to the ``{\em $\langle$adversary type$\rangle$ dual graph model}",
we mean the dual graph model combined with adversaries that satisfy the {\em $\langle$adversary
type$\rangle$} constraints.
%For example, when we say ``online adaptive dual graph model," we mean the dual graph model with a link process
%that satisfies the online adaptive constraint.

\noindent {\bf The MIS and CDS Problems.}  
Fix some undirected graph $H=(V,E)$. We say $S\subseteq V$ is a {\em maximal independent set} (MIS) of $H$ if it satisfies the following
two properties: (1) $\forall u,v\in S$, $u\neq v$: $\{u,v\} \notin E$ (no two nodes in $S$ are neighbors in $H$);
and (2) $\forall u\in V \setminus S$, $\exists v\in S$: $\{u,v\} \in E$ (every node in $H$ is either in $S$ or neighbors a node in $S$).
We say $C\subseteq V$ is a {\em connected dominating set} (CDS) of $H$ if it satisfies property $2$ from the MIS definition (defined now with respect to $C$),
and $C$ is a connected subgraph of $H$. 
%
%We say $C$ is a {\em $\beta$-approximation of the minimum}, if $|C| \leq |C_{min}(G)|$,
%where $C_{min}(G)$ is the smallest CDS for graph $G$.

In this paper, we assume the structuring algorithms used to construct an MIS or CDS  
run for a fixed number of rounds then have each node output a $1$ to indicate it joins the set
and a $0$ to indicate it does not (that is, we consider Monte Carlo algorithms).
It simplifies some of the lower bounds that follow to exactly specify how a node makes its decision to output $1$ and $0$.
With this in mind, in this paper, we assume that at the end of a fixed-length execution, the algorithm provides the nodes a function that each node will use
to map the following information to a probability $p\in [0,1]$ of joining the relevant set: (1) the node's id; (2) the ids of the node's neighbors (in the advance
neighborhood knowledge setting); and (3) the node's message history (which messages it received and in what rounds they were received).
The node then outputs $1$ with probability $p$ and $0$ with probability $1-p$.
For a given algorithm ${\cal A}$, we sometimes use the notation ${\cal A}.out$ to reference this function. 

We say a structuring algorithm ${\cal A}$ {\em solves the MIS problem in $f(n)$ rounds} if it has each node output after $f(n)$ rounds, for network size $n$,
and this output is a correct MIS with respect to $G$ (the reliable link graph) with at least constant probability. 
We say an algorithm ${\cal A}$ {\em solves the CDS problem in $f(n)$ rounds and provides a $g(n)$-approximation} if it has each node output after $f(n)$
rounds, it guarantees that this output is a correct CDS with respect to $G$, {\em and} it guarantees the size of the CDS is within a factor of $g(n)$ of the size of the
minimum-sized CDS for $G$, 
for network size $n$, also with at least constant probability.

\section{The Necessity of Geographic Constraints}
\label{sec:geographic}

We begin by proving that the geographic constant is necessary
to efficiently solve the MIS and CDS problem in the dual graph model.
In both bounds we assume an online adaptive adversary,
which is weaker than the offline adaptive adversary assumed in~\cite{censor:2011}---strengthening
our results.

\subsection{MIS Lower Bound}
\label{sec:active:mis}

We prove that without the geographic constraint every MIS solution  requires a time complexity that is arbitrarily close
to $\Omega(n/\log{n})$ rounds.
This is exponentially worse than the $O(\log^3{n})$-round solution possible with this constraint.
Our proof argument begins by introducing and bounding an abstract
game that we call {\em selective ring coloring}.
This game is designed to  capture a core difficulty of constructing in MIS in this unreliable setting.
We then connect this game to the MIS problem using a reduction argument.

\noindent {\bf The Selective Ring Coloring Game.}
The $(g,n)$-{\em selective ring coloring game} is defined
a function  $g:\mathbb{N} \rightarrow \mathbb{N}$ and some integer $n>0$.
The game is played between a player 
and a referee (both formalized as randomized algorithms) as follows.
Let $t(n)$ be the set containing all $\frac{n!}{(n-3)!}$ ordered triples of values from $[n]$.
In the first round, the player generates a mapping $C:t(n) \rightarrow \{1,2,3\}$
that assigned a color from $\{1,2,3\}$ to each triple in $t(n)$.
Also during the first round, the referee assigns unique ids from $[n]$
 to a ring of size $n$.
In particular, we define the ring as the graph $R = (V,E)$,
where $V=\{u_1,u_2,...,u_n\}$,
and $E=\{ \{u_i,u_{i+1}\}\mid 1 \leq i <n\} \cup \{ u_n,u_1\}$.
Let $\ell:V\rightarrow [n]$ be the bijection describing the referee's assignment of ids to this ring.
 The player and referee have no
 interaction during this round---their decisions regarding $C$ and $\ell$ are made independently.
 
 At the beginning of the second round, the player sends the referee $C$
 and the referee sends the player $\ell$.
 Consider the coloring that results when we color each $u_i$ in the ring
 with color $C(\ell(u^{CC}_{i}), \ell(u_i),\ell(u^{C}_{i}))$, where $u^{CC}_{i}$ and $u^C_{i}$ are
 $u_i$'s counterclockwise and clockwise neighbors, respectively.
 Notice, it is possible that this graph suffers from some coloring violations.
 This brings us to the third round.
 In the third round, the player generates a set $S$ containing up to $g(n)$ ids from $[n]$.
 It sends this set of {\em exceptions} to the referee.
 The referee considers the coloring of the nodes left in R once the exceptions and their incident edges are removed from the graph.
 If any coloring violations still remain,
 the referee declares that the player {\em loses}. Otherwise, it declares that the player {\em wins}.
 
\noindent {\bf A Lower Bound for Selective Ring Coloring.}
We now prove a fundamental limit on solutions to the selective ring coloring game.
In particular, we prove that to win the game with constant probability requires
a value of $g(n)$ that is close to $n$.
To prove this lower bound we will make
use of a useful combinatorial result
established in
Linial's seminal proof of the necessity of $\Omega(\log^*{n})$ rounds to constant-color a ring in the message passing model.
This result concerns the following graph definition which captures relationships
between possible $t$-neighborhoods of a ring with ids from $[m]$:

\begin{definition}
Fix two integers $t$ and $m$, where $t>0$ and $m > 2t+1$.
We define the undirected graph $B_{t,m}= (V_{t,m},E_{t.m})$ as follows:
\begin{itemize}
  \item  $V_{t.m}=\{  (x_1,x_2,...,x_{2t+1}) \mid \forall i,j\in[2t+1]: x_i\in [m], i \neq j \Rightarrow x_i \neq x_j \}$.
\item $E_{t,m}= \{  \{ v_1,v_2\} \mid v_1 = (x_1,...,x_{2t+1}), v_2 = (y,x_1,...,x_{2t} ), y \neq x_{2t+1} \}$.
\end{itemize}
\label{def:viewgraph}
\end{definition}

Notice that in the context of the message passing model,
each node $(x_1,x_2,...,x_{2t+1})$ in $B_{t,m}$ represents
a potential {\em view} of a {\em target} node $x_{t+1}$ in an execution,
where {\em view} describes what ids a given node in the ring learns in a $t$ round execution in this model;
i.e., its id, and the id of nodes within $t$ hops in both directions.
%With this in mind, we can describe a $t$-round $3$-coloring
%algorithm as a mapping from possible views to the colors that the target in the view
%would adopt if it learned that information in an execution.
%
%\begin{definition}
%A $3$-coloring algorithm ${\cal C}$ that executes in $t>0$ rounds in a ring
%of size a $m \geq 2t+1$ with unique ids from $[m]$ describes a mapping
%from $V_{t,m}$  to $\{1,2,3\}$.
%\label{def:coloralg}
%\end{definition}
%
The following result (adapted from Theorem $2.1$ of~\cite{linial:1992})
bounds the chromatic  number of $B_{t,m}$.

%The augment for the below is contained in the proof of Theorem $2.1$ of~\cite{linial:1992}.
%It leverages ``standard graph-theoretic arguments."

 \begin{lemma}[From~\cite{linial:1992}]
 Fix two integers $t$ and $m$, where $t>0$ and $m > 2t+1$,
 and consider the graph $B_{t,m}$.
 It follows:
 $\chi(B_{t,m}) = \Omega(\log^{(2t)}{m})$, where $\log^{(2t)}{m}$ is the $2t$ times iterated logarithm of $m$.
 \label{lem:linial}
 \end{lemma}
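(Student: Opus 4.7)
I would prove this as a direct adaptation of Linial's original chromatic-number argument underlying his $\Omega(\log^* n)$ ring-coloring bound. The key conceptual observation is that $B_{t,m}$ encodes the space of possible $t$-round deterministic \emph{views} in a labeled ring with ids drawn from $[m]$: a vertex $(x_1,\ldots,x_{2t+1})$ represents the ids of a target node together with its $t$ counterclockwise and $t$ clockwise neighbors, and two vertices share an edge precisely when the corresponding pairs of views are consistent with occupying neighboring positions in some ring. A proper $c$-coloring of $B_{t,m}$ is therefore exactly the same object as a $t$-round deterministic algorithm that $c$-colors every $[m]$-labeled ring using only local id views, so lower-bounding $\chi(B_{t,m})$ is really the combinatorial heart of the algorithmic lower bound.

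The main argument would proceed by an iterated \emph{descent lemma}: I would show that a proper $c$-coloring of $B_{t,m}$ induces a proper $2^c$-coloring of $B_{t-1,m'}$ for some $m'$ only slightly smaller than $m$. The construction is the classical shift-graph collapse used by Erd\H{o}s--Hajnal and Linial: the induced color of a $(2t-1)$-tuple $u$ is the $c$-bit incidence vector indicating which colors appear among the $B_{t,m}$-colors of the valid $(2t+1)$-tuple extensions of $u$. A direct combinatorial check---exploiting that edges of $B_{t-1,m'}$ correspond to shifting the window by one position, and that the new end coordinate still has $\Omega(m)$ legal choices---establishes that two $B_{t-1,m'}$-adjacent tuples must receive distinct incidence vectors, since otherwise one can assemble two $c$-equally colored yet $B_{t,m}$-adjacent extensions, contradicting properness of the original coloring.

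Iterating this descent $2t$ times starting from $B_{t,m}$ reduces the problem to a base graph whose chromatic number is linear in a mildly shrunk copy of $m$ (essentially a clique on $\Theta(m)$ vertices). Chaining the inequalities yields that $c$, iterated-exponentiated $2t$ times, is at least $\Omega(m)$, which rearranges to $\chi(B_{t,m}) = \Omega(\log^{(2t)}{m})$, as claimed. The main obstacle, and the place where Linial's careful bookkeeping is indispensable, is verifying the descent lemma itself: one must rule out the possibility that two $B_{t-1,m'}$-adjacent shorter tuples realize \emph{identical} sets of extension colors, which is precisely where the hypothesis $m > 2t+1$ is used to guarantee enough free room for distinct extensions at both ends of the window, and where the ``shift by one'' edge definition of $B_{t,m}$ supplies the structural rigidity needed to propagate a coloring violation back from the larger graph to the smaller one.
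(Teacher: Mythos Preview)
The paper does not prove this lemma at all: it is stated with the attribution ``[From~\cite{linial:1992}]'' and invoked as a black box, with no proof environment following it. Your proposal is therefore not competing with anything in the paper---it is simply a (correct) sketch of Linial's original argument, which is exactly what the paper is citing.

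Your outline is essentially right and captures the standard proof: interpret vertices of $B_{t,m}$ as $t$-round views, and run the shift-graph descent that trades a proper $c$-coloring on length-$(s{+}1)$ tuples for a proper $2^c$-coloring on length-$s$ tuples, iterating down to a clique on $[m]$. One small inaccuracy worth cleaning up if you flesh this out: you do not need to shrink $m$ to some $m'$ at each step. The descent is purely on the tuple length (from $2t{+}1$ down to $1$), with the label set $[m]$ held fixed throughout; the hypothesis $m>2t{+}1$ is what guarantees that at every stage there is at least one legal fresh coordinate to extend by, so the ``set of extension colors'' map is well-defined and the adjacency argument goes through. Also note that your phrasing ``a proper $2^c$-coloring of $B_{t-1,m'}$'' jumps two descent steps at once (from tuples of length $2t{+}1$ to $2t{-}1$); the cleaner statement is one step at a time on tuple length, which after $2t$ iterations lands on $K_m$ and yields $\chi(B_{t,m}) \ge \log^{(2t)} m$.
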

 
We use this lemma in a key step in our following multi-step proof of the need for close to $n$ exceptions
to win selective ring coloring.

\begin{lemma}
Let ${\cal P}$ be a player 
that guarantees to solve
 the $(g,n)$-selective ring coloring game with constant probability,
 for all $n$. It follows that for every constant $\epsilon, 0< \epsilon \leq 1$:
 $g(n) = \Omega(n^{1-\epsilon})$.
 \label{lem:coloring1}
\end{lemma}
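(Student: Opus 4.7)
My plan is to reduce the game to a chromatic property of Linial's graph $B_{1,n}$ and then use the second moment method to show that a uniform labeling produces linearly many violations that no small exception set can cover. The reduction is immediate: the player's function $C:t(n)\to\{1,2,3\}$ is a $3$-coloring of the vertex set of $B_{1,n}$, and a ring violation on edge $(u_i,u_{i+1})$ corresponds exactly to a monochromatic edge of $B_{1,n}$ on the pair of vertices $\bigl((\ell(u_{i-1}),\ell(u_i),\ell(u_{i+1})),(\ell(u_i),\ell(u_{i+1}),\ell(u_{i+2}))\bigr)$. Since the violation edges form a subgraph of a cycle (maximum degree $2$), an exception set of size $g(n)$ wins only if $g(n)\ge V(\ell)/2$, where $V(\ell)$ is the total number of violations.

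Next I would show that a constant fraction of the edges of $B_{1,n}$ are monochromatic under \emph{any} such $C$. By Lemma~\ref{lem:linial} with $t=1$, there is a fixed constant $m_0$ with $\chi(B_{1,m_0})>3$, so every restriction of $C$ to the triples using ids from an $m_0$-subset of $[n]$ already contains at least one monochromatic edge. A double-counting argument---each of the $\binom{n}{m_0}$ such subsets contributes at least one monochromatic edge of $B_{1,n}$, and each edge of $B_{1,n}$ (determined by a single ordered $4$-tuple of distinct ids) lies in at most $\binom{n-4}{m_0-4}$ subsets---yields at least $\binom{n}{m_0}/\binom{n-4}{m_0-4}=\Omega(n^4)$ monochromatic edges, i.e.\ a fraction $p=\Omega(1)$ of all $n(n-1)(n-2)(n-3)$ edges. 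For uniform random $\ell$, each consecutive $4$-tuple $(\ell(u_{i-1}),\ldots,\ell(u_{i+2}))$ is a uniform ordered $4$-tuple of distinct ids, so $\Pr[X_i=1]=p$ and $E[V]=pn$ for $V=\sum_i X_i$. A variance estimate gives $\mathrm{Var}(V)=O(n)$: pairs $(i,j)$ with $|i-j|\ge 4$ involve disjoint $4$-tuples and contribute non-positive covariance (the bijection samples ids without replacement), while only $O(n)$ nearby pairs each contribute at most $1$. Chebyshev then yields $V\ge pn/2$ with probability $1-o(1)$, so a player winning with constant probability must satisfy $g(n)\ge V/2=\Omega(n)$, which in particular gives $g(n)=\Omega(n^{1-\epsilon})$ for every $\epsilon\in(0,1]$.

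I expect the variance bound to be the most delicate step, because the indicators $X_i$ are functions of overlapping windows of a random permutation rather than of independent coordinates. The cleanest route is the negative-correlation observation: disjoint ordered tuples drawn from a uniform permutation are negatively correlated, so covariance contributions from distant pairs can be discarded and only the $O(n)$ overlapping pairs need to be bounded crudely. If that direct calculation turns out to be fragile, McDiarmid's inequality for Lipschitz functions of a uniformly random permutation gives the same $1-o(1)$ concentration and closes the argument identically.
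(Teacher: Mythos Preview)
Your argument is correct and in fact proves the stronger bound $g(n)=\Omega(n)$, not just $\Omega(n^{1-\epsilon})$. The route differs substantially from the paper's. The paper's referee partitions the ring into $n/f(n)$ blocks of length $f(n)=n^{\epsilon/5}$, permutes ids \emph{independently} within each block, and uses Lemma~\ref{lem:linial} applied to $B_{1,f(n)}$ to guarantee each block has a violation with probability at least $f(n)^{-4}$; independence across blocks then gives Chernoff concentration around an expectation of $n/f(n)^5=n^{1-\epsilon}$. You instead take a single uniform permutation and use Lemma~\ref{lem:linial} once, at a fixed constant size $m_0$, together with a double-count to show a \emph{constant} fraction of all edges of $B_{1,n}$ are monochromatic under any $3$-coloring. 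This is the key gain: it lifts the per-position violation probability from $f(n)^{-4}$ to $\Theta(1)$ and removes the $\epsilon$ loss entirely. The trade-off is that you lose the clean independence the paper engineers, and must argue concentration for a function of a global permutation.

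One caution on that step: the assertion that disjoint ordered tuples drawn without replacement yield \emph{non-positive} covariance for arbitrary indicator events is not true in general (sampling without replacement gives negative association only for coordinatewise-monotone functions, and your $X_i$ need not be monotone). What does hold, and is all you need, is that for $|i-j|\ge 4$ the covariance satisfies $\mathrm{Cov}(X_i,X_j)=O(1/n)$: conditioning on the $4$-tuple at position $i$ changes the distribution of the disjoint $4$-tuple at position $j$ only by the $O(1/n)$ correction $n^{(4)}/(n-4)^{(4)}=1+O(1/n)$. Summed over $O(n^2)$ pairs this still gives $\mathrm{Var}(V)=O(n)$ and Chebyshev finishes. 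Your McDiarmid backup is also sound (swapping two values of $\ell$ changes $V$ by at most a constant), and is arguably the cleanest way to close the argument.
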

\begin{proof}
Assume for contradiction that for some constant $\epsilon$ that satisfies the constraints of the lemma statement,
 some $g(n) = o(n^{1-\epsilon})$,
and some player ${\cal P}$ that guarantees for all $n$ to win the $(g,n)$-selective ring coloring game with constant probability.

We start by describing a referee that will give ${\cal P}$ trouble.
%To do so, we start by defining a function $f(n) = n^{\epsilon/5}$.
%
%
%
In more detail,
to define $\ell$,
the referee
first assigns $i$ to node $u_i$, for all $i\in [n]$.
It then
partitions the ring with this preliminary assignment into consecutive sequences of nodes each of length $f(n) = n^{\epsilon/5}$.
Finally,
for each partition, it takes the ids assigned to nodes in the partition
and permutes them with uniform randomness. We emphasize that the permutation in each partition is independent.
We reference the $n/f(n)$ partitions\footnote{For simplicity of notation, we assume $f(n)$ and $n/f(n)$ are whole numbers.
We can handle the other case through the (notationally cluttered) use of ceilings and floors.}
of $R$
as $P_1,P_2,...,P_{n/f(n)}$.
We use $I_1,I_2,...,I_{n/f(n)}$ to describe the corresponding ids in each partition.

To understand the effectiveness of this strategy we start by exploring
the difficulty of correctly coloring these partitions.
Intuitively, we note that a $1$-round coloring algorithm in the message passing model needs more
than a constant number of colors to guarantee to correctly color all permutations of a non-constant-sized partition.
This intuition will provide the core of our upcoming argument
that many of our referee's random id assignments generate coloring violations
for any given $C$ provided by ${\cal P}$.
To formalize this intuition we leverage the result from Linial we established above.
To do so,
fix some $P_i$.
Let $b$ be a bijection from $[|P_i|]$ to $I_i$.
Next consider $B^b_{1,f(n)}$,
which we define the same as $B_{1,f(n)}$,
except we now relabel each vertex $(x,y,z)$ as $(b(x),b(y),b(z))$.
%Consider the coloring of $B^b_{1,f(n)}$ that results
%when we apply $C$ to each vertex in this graph.
%
By Lemma~\ref{lem:linial},
we know that
$\chi(B_{1,f(n)}) = \Omega(\log^{(2)}{f(n)})$.
Clearly, this same result still holds for $\chi(B^b_{1,f(n)})$ (as we simply transformed the labels).
Notice, because $f(n)=\omega(1)$,
it follows that
for sufficiently large $n$, 
$\chi(B^b_{1,f(n)})$  is strictly larger than $3$.
Fix this value of $n$ for the remainder of this proof argument.
%
%It also follows that for some sufficiently large $n$, 
%$f(n) \geq 4$.
%For the remainder of this proof,
%ix some such value of $n$.

We now consider a specific instance of our game with ${\cal P}$,
our referee, and our fixed value of $n$.
Focus as above on partition $P_i$.
Let $C$ be the coloring function produced by the player
 and $\ell$ the assignment produced by our referee.
Because we just established that the chromatic number of
 $ B^b_{1,f(n)}$ is larger than $3$,
 if we color this graph with $C$ (which uses only three colors),
there are (at least) two neighbors $v_1 = (x_1,x_2,x_3)$ and
$v_2 = (y,x_1,x_{2})$ in the graph that are colored the same.

It follows that if $\ell$ happens to assign the sequence of ids $y,x_1,x_2,x_3$ 
to four consecutive nodes in $P_i$,
$C$  will color $x_1$ and $x_2$ the same, creating a coloring violation.\footnote{A subtlety in this
step is that we need $|P_i| = f(n)\geq 4$.
If this is not true for the value of $n$ fixed above we can just keep increasing this value until it becomes true.}
We can now ask what is the probability that this bad sequence of ids is chosen by $\ell$?
This probability is crudely lower bounded by $|I_i|^{-4} = f^{-4}(n)$.
We can now expand our attention to the total number of partitions with coloring violations.
To do so,
we define the following indicator variables to capture which partitions have coloring violations:
%let $X_j$ be the indicator variable defined as follows:  that equals $1$ if there is a coloring violation in partition $P_j$
%with respect to $C$ and  $\ell$. 
\[
\forall j\in [n/f(n)], X_j = 
\begin{cases}
 1 &\text{if $P_j$ has a coloring violation w.r.t.~$C$ and $\ell$,}  \\
 0 & \text{else.}\\
\end{cases}
\]

We know from above that for any particular $j$,
$\Pr[X_j = 1] > f^{-4}(n)$.
It follows directly from our process for defining $\ell$ that this probability is independent for each $X_j$.
If $Y=X_1+X_2+...+X_{n/f(n)}$ is the total number of coloring violations,
 therefore, by linearity of expectation, and the fact that $f(n) = n^{\epsilon/5}$,  the following holds:

 $$\mathbb{E}[Y] = \mathbb{E}[X_1] +  \mathbb{E}[X_2] + ... +  \mathbb{E}[X_{n/f(n)}] > \frac{n}{f^5(n)} = n^{1-\epsilon}.$$

A straightforward application of Chernoff tells us that $Y$ is within a constant factor
of this expectation with high probability in $n$.
 We are now ready to pull together the pieces to reach our contradiction.
 We have shown that with high probability our referee strategy,
 combined with player ${\cal P}$,
 generates at least $n^{1-\epsilon}$ coloring violations,
 with high probability.
 We assumed, however, that $g(n)=o(n^{1-\epsilon})$.
 It follows (for sufficiently large $n$) that
 with high probability,
 the player will not have enough exceptions to cover all the coloring violations.
 His success probability, therefore, is sub-constant.
 This contradicts our assumption that the player wins with at least constant
 probability for this definition of $g$.
 \end{proof}

\noindent {\bf Connecting Selective Ring Coloring to the MIS Problem.}
Our next step is to connect the process of building an MIS in our particular wireless model
to achieving efficient solutions to the ring coloring game we just bounded.
At a high-level, this argument begins by noting that if you can build an MIS fast then you can three color a ring
fast. It then notes this if you can three-color a ring fast in our online adaptive model,
then you can do so with an adversary that ends up forcing many partitions in the ring to decide without receiving a message
(and therefore,
base their decision only on the ids of their reliable neighbors). To conclude the proof,
we show that the coloring generated by this function can be used to win the selective ring coloring game.
The faster the original MIS algorithm works, the smaller the $g$ for which it can solve selective ring coloring.
%
%Due to space constraints, the below proof is deferred to the appendix. 

\begin{lemma}
Let ${\cal A}$ be an algorithm that solves the MIS problem in $g(n)$ rounds,
for some polynomial $g$,
in the online adaptive dual graph model with a network size of $n$, advance neighborhood knowledge, but no geographic constraint.  It follows that there exists a player
${\cal P}_{\cal A}$ that solves the $(g',n)$-selective coloring game with some constant probability $p'$,
where $g'(n) = O(g(n)\cdot\log{n})$.
\label{lem:coloring2}
\end{lemma}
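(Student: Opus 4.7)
The plan is to construct ${\cal P}_{\cal A}$ through a three-stage reduction: package ${\cal A}$ as a randomized ring 3-coloring algorithm ${\cal B}$; specify a dual graph $(G,G')$ and online-adaptive adversary against which ${\cal B}$'s execution leaves most nodes silent (receiving no messages during the entire run); and derive the player's coloring function $C$ from ${\cal B}$'s behaviour on silent triples, using non-silent nodes as exceptions. For the first stage, I exploit the fact that an MIS on a ring partitions non-MIS vertices into runs of length $1$ or $2$, so that a node which knows its own MIS bit and those of its two ring neighbours can apply a canonical local rule to output a color in $\{1,2,3\}$ (MIS $\mapsto 1$; non-MIS with only its CCW neighbour in MIS $\mapsto 2$; non-MIS with only its CW neighbour in MIS $\mapsto 3$; both neighbours in MIS tie-broken to $2$). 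I augment ${\cal A}$ with an $O(\log n)$-round post-processing phase in which each node broadcasts its MIS bit to its ring neighbours via a decay-style randomized schedule, and I equip ${\cal B}$ with a deterministic fallback rule (depending only on the node's id-triple and a shared random tape) that any node invokes when it fails to hear both ring neighbours. The fallback rule is chosen so that whenever ${\cal B}$ outputs a valid 3-coloring of the full ring, the fallback colors at silent nodes coincide with the global coloring. The resulting ${\cal B}$ runs in $f(n) = g(n) + O(\log n) = O(g(n)\log n)$ rounds.

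For the second stage, I set the reliable graph $G$ to be the ring of the selective-ring-coloring game and let $G'$ contain a dense family of unreliable cross-edges (for example, complete bipartite structures between ring segments of length $\Theta(\log n)$). The online-adaptive adversary activates these cross-edges each round to generate collisions across the ring. The objective is to establish, with constant probability over ${\cal B}$'s random choices, that at most $O(g(n)\log n)$ distinct nodes ever receive a successful transmission across ${\cal B}$'s $f(n)$ rounds; the silent majority then bases its final color only on its id-triple and the shared random tape. The bound is obtained by combining a per-round cap on successful receptions (coming from the chosen cross-edge structure) with a Chernoff-style concentration over the $f(n)$ rounds.

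For the third stage, in round $1$ of the game ${\cal P}_{\cal A}$ samples the random tape $\rho$ (which induces each simulated node's private randomness through a fixed hash of its id) and commits $C(a,b,c)$ to be the color that ${\cal B}$'s fallback rule outputs on id-triple $(a,b,c)$ under $\rho$. In round $3$, upon receiving the labelling $\ell$, ${\cal P}_{\cal A}$ internally simulates the full execution of ${\cal B}$ on the ring with id-assignment $\ell$, the adversary from the second stage, and random tape $\rho$; it submits the set of non-silent nodes as $S$. With constant probability, $|S| \leq g'(n) = O(g(n)\log n)$ by the silencing bound of stage two, and ${\cal B}$'s output is a valid 3-coloring of the ring whose silent-node colors agree with $C$ by the fallback-rule design of stage one, so the coloring of the ring with $S$ (and its incident edges) removed is violation-free, giving the player a win.

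The main technical obstacle is the silencing argument in stage two. Because the online-adaptive adversary cannot observe the current round's random bits, its per-round edge choices must work uniformly across every randomized transmit schedule that ${\cal B}$ might follow; exhibiting a concrete $G'$ and adversary strategy that achieves the required per-round reception cap (and hence the $O(g(n)\log n)$ overall silenced fraction) is the heart of the proof. A secondary subtlety is designing the fallback rule in stage one to be simultaneously triple-local (so that it yields a well-defined $C$) and coherent with the global 3-coloring whenever ${\cal B}$ succeeds, so that the silent nodes of a successful execution contribute no coloring violations of their own.
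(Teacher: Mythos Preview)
Your three-stage outline matches the paper's approach almost exactly: convert MIS to ring $3$-coloring, run the coloring algorithm against an adversary that keeps most nodes silent, define $C$ from silent-node behaviour, and declare non-silent nodes as exceptions. The two substantive places where you diverge from the paper are also the two places where your proposal is incomplete.

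\textbf{Stage 2 is the real gap.} You correctly flag the silencing argument as ``the heart of the proof,'' but you do not supply it; ``complete bipartite structures between ring segments of length $\Theta(\log n)$'' with the adversary ``activating cross-edges each round'' is not enough to force an $O(\log n)$ per-round reception cap against an arbitrary randomized broadcast schedule. The paper's construction is both simpler and decisive: take $G'$ to be the \emph{complete} graph, and have the online adaptive adversary compute $\mathbb{E}[B_r]$ (the expected number of broadcasters in round $r$, which it knows without seeing the round-$r$ coins). If $\mathbb{E}[B_r]\ge c\log n$, include every $G'$ edge; then with high probability at least two nodes broadcast and everyone collides, so nobody receives. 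If $\mathbb{E}[B_r]<c\log n$, include no extra edges; then with high probability at most $O(\log n)$ nodes broadcast, and since $G$ is a ring each reaches only its two ring neighbours, so $O(\log n)$ nodes receive. A union bound over $g(n)$ rounds gives the $O(g(n)\log n)$ exception set. The threshold on $\mathbb{E}[B_r]$ is the idea you are missing; your bipartite cross-edge scheme, as stated, does not obviously handle the low-broadcast-probability regime.

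\textbf{Stage 1 is overcomplicated in a way that creates an artificial obstacle.} You introduce a separate ``fallback rule'' and then worry about making it ``coherent with the global $3$-coloring whenever ${\cal B}$ succeeds.'' This coherence problem is self-inflicted. The paper simply defines $C(i,j,k)$ to be ${\cal A}'.out$ evaluated at id $j$, neighbour set $\{i,k\}$, and the \emph{empty} message history, using pre-sampled randomness $\kappa_{i,j,k}$; it then reuses the same $\kappa_{i,j,k}$ when resolving outputs in the simulated execution. A node that received no messages therefore outputs exactly $C(i,j,k)$ \emph{by construction}---there is nothing to design and no coherence to verify. If the simulated run produces a valid $3$-coloring (which happens with constant probability because ${\cal A}'$ is assumed correct against every online adaptive adversary, including this one), then removing the non-silent nodes leaves only nodes whose colors agree with $C$, and the player wins. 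Your decay-style post-processing and explicit fallback layer are unnecessary once you adopt this viewpoint.
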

%
%
%%%%%%%%%%%%%%%%%%%%%%%%%
% ----------------------------------------------------------------
% ----------------------------------------------------------------
% ----------------------------------------------------------------
% ----------------------------------------------------------------
% ----------------------------------------------------------------
%%%%%%%%%%%%%%%%%%%%%%%%%
% Moved to Appendix
%%%%%%%%%%%%%%%%%%%%%%%%%
% ----------------------------------------------------------------
% ----------------------------------------------------------------
% ----------------------------------------------------------------
% ----------------------------------------------------------------
% ----------------------------------------------------------------
%%%%%%%%%%%%%%%%%%%%%%%%%
%\iffalse
\begin{proof}
Fix some ${\cal A}$, $g$, and $p$
matching the assumptions of the lemma statement.
We first transform ${\cal A}$ into an algorithm ${\cal A'}$ that guarantees to provide
a correct $3$-coloring (also with probability $p$) when executed in a ring.
This new algorithm ${\cal A'}$ works as follows: (1) it runs ${\cal A}$ to generate an MIS in the ring; (2) all MIS nodes take
color $1$, then  send a ``left" message to their left neighbor and a ``right" message to their right neighbor;
(3) any non-MIS node the receives a ``left" message adopts color $2$, and all other non-MIS nodes adopt color $3$.
(This transformation depends on our assumption that the ring is oriented, but since we are proving
a lower bound, adding this assumption strengthens our result.)

We now define a selective ring coloring player ${\cal P}_{\cal A}$ that uses  ${\cal A'}$ to generate its plays.
In particular, in round $1$ of the selective coloring game,
the player must define $C(i,j,k)$ for each ordered unique triple $(i,j,k)\in t(n)$, where $n$ is the size fixed for this instance of the game.
To do so, it starts by generating with uniform randomness a bit string $\kappa_{i,j,k}$
for each $(i,j,k)\in t(n)$,
where $\kappa_{i,j,k}$ is sufficiently long to 
generate a decision for any probability
returned by ${\cal A'}.out$ (see Section~\ref{sec:model}).
Next,
for each such $(i,j,k)$,
it defines $C(i,j,k)$
by using $\kappa_{i,j,k}$ to
choose a color using the probability
specified by ${\cal A'}.out$ when
this function is applied
to id $j$,  (left and right) neighbors $i$ and $k$,
and an empty message history.

%
%
%${\cal P}_{\cal A}$ begins by generating a sufficient number of random bits, $\kappa_i$,
%for each id $i$, to resolve the random choices of a node with that id running ${\cal A'}$.
%It uses these bits and the definition of ${\cal A'}.out$ (the output function for ${\cal A'}$)\footnote{Technically,
%we defined output functions for MIS and CDS algorithms, but the definition directly generalizes to coloring
%as well, the only difference between that the output is a color instead of a bit.}
%to define $C(i,j,k)$ for each such relevant triple.
%In particular, for a given $i\neq j\neq k$,
%it defines $C(i,j,k)$ to return the same value as
%${\cal A'}.out$ applied to $\kappa_j$, id $j$, neighbor set $\{i,k\}$,
%and an {\em empty} message history. 

Let $\ell$ be the id assignment passed by the referee after this round.
At this point, ${\cal P}_{\cal A}$ must use $\ell$ and $C$ to generate its list of exceptions.
To do so, it simulates ${\cal A'}$ in a dual graph network
where $G=R$ (i.e, the ring), $G'$ is the complete graph over all nodes,
and  id assignments are from $\ell$.
It runs this simulation right up to the point where nodes
need to make their coloring decisions.
To then make the final coloring decisions,
for each node $j$ with left and right neighbors $i$ and $k$, respectively,
the player uses $\kappa_{i,j,k}$ to select the color using the probability
specified by ${\cal A'}.out$ for $j$'s message history in this simulation.
(Recall, it already generated all these $\kappa$ strings during the first round. We are using the same $\kappa$ definitions here.)

To implement this process, however,
we must specify how the player handles message receive behavior in its simulation.
To do so, the player
implements an online adaptive adversary to control the unreliable links in the network.
In particular, at the beginning of each round $r$, we note that by definition an online adaptive  adversary
knows the probability that each node $u$ will broadcast in $r$.
The player can therefore simulate the adversary calculating 
 $\mathbb{E}[B_r]$ at the beginning of round $r$, where $B_r$ is the number of nodes that end up broadcasting in $r$
 after probabilistic choices for this round are resolved.
 An adversary of this type can calculate  $\mathbb{E}[B_r]$ at the beginning of $r$, but not $B_r$ (as it does not
 know the bits in advance that will be used to resolve this round's probabilistic choices).
 The player simulates the adversary's decision in $r$ based on the value of this expectation.
 In particular:
 
 \begin{itemize}
 
 \item If $\mathbb{E}[B_r] \geq c\log{n}$,
	for some positive constant $c$ we will fix later, the player has the adversary include all edges from $G'\setminus G$ in the graph for the round.

 \item If $\mathbb{E}[B_r] < c\log{n}$, the player has the adversary include {\em no} edges from $G'\setminus G$.

 \end{itemize}
 
 Call this $g(n)$-round simulation $\alpha$.
 Let $X$ be the set containing the id of every node that received a message in $\alpha$.
 We have our player return $X$ as its exception set.

We now analyze how our player fares with this playing strategy.
To do so,
we first note 
that the probability that the $\kappa$ values chosen by the player
generate a correct coloring in $\alpha$ is at least $p$,
where $p$ is the constant probability with which our algorithm guarantees to work.
We next note that for an appropriately chosen constant $c$ used in our adversary definition,
with high probability, no more than $O(\log{n})$ nodes receive a message in any given round of $\alpha$.
%
%This follows because node broadcast decisions in a given round are resolved with independent sources of randomness, and therefore,
%it is likely that the actual number of broadcasts in a given round are within a constant factor of the expectation.
This follows because in the case where $\mathbb{E}[B_r] \geq c\log{n}$
(which has the adversary include all unreliable links in the topology),
there is a high probability that at least two nodes broadcast, causing a collision everywhere and preventing any receptions,
and when $\mathbb{E}[B_r] < c\log{n}$ (which has the adversary remove all unreliable links)
with high probability no more than $O(\log{n})$ broadcast, and only their constant number of neighbors in the ring can receive these messages.

We now ask what is the probability that both events occur. That is, that $\alpha$ produces a correct
coloring {\em and} has no more than $O(\log{n})$ nodes receive a message in any given round.
Notice that these two events are not necessarily independent.
If the adversary fails to bound communication, for example,
perhaps the probability of a correct coloring increases or decreases.
We can, however, easily deal with the dependence in this case by applying a union
bound to the failure probability for each event.
The probability that the coloring fails is upper bounded by some constant $1-p$,
and the probability that the communication bound property fails is upper bounded by 
some sub-constant value that is no more than $1/n$.
As we grow $n$,
this sums to some constant that approaches $(1-p)$ from above.
The probability that both events {\em do} occur,
therefore, can be seen as some constant  that approaches $1-(1-p)=p$ as $n$ increases.
We can therefore fix some constant value $n_0$ for $n$ and constant $p'$,
such that for any $n > n_0$, 
these two events holds with probability at least $p'$.
For the constant number of network sizes that are of size $\leq n_0$,
we can hardcode a trivial solution where all nodes broadcast their ids and their neighbor
ids in round robin in order, then locally and consistently calculate a solution that guarantees
both properties hold.
For all larger values, 
we default to the standard algorithm.
With this slightly modified algorithm it follows that both properties hold with constant probability.

 With constant probability $p'$,
 therefore, 
 $\alpha$ generates a correct coloring and an exception set $X$ of size $O(g(n)\cdot \log{n})$. 
 Because the ring is correctly colored, if we remove the nodes in $X$ from the graph,
 then the remaining nodes are still correctly colored.
 Notice, however, that these non-$X$ nodes received no messages in $\alpha$ (by definition),
and  therefore each such node choose the same color as returned by $C$ for $\ell$.
Pulling together the pieces, we have shown that with some constant probability at least $p'$, 
the player ${\cal P}_{\cal A}$ 
wins the $(g',n)$-selective coloring
 game for some $g'(n) = O(g(n)\cdot\log{n})$.
\end{proof}
%\fi
%%%%%%%%%%%

\noindent Our final theorem follows directly from Lemmata~\ref{lem:coloring1} and~\ref{lem:coloring2}:

\begin{theorem}
Let ${\cal A}$ be an algorithm that solves the MIS problem in $f(n)$ rounds
in the online adaptive dual graph model with a network size of $n$,  advance neighborhood knowledge, and no geographic constraint.
It follows
that for every constant $\epsilon$, $0 < \epsilon \leq 0$,
$f(n) = \Omega(n^{1-\epsilon}/\log{n})$.
\end{theorem}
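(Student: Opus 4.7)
The plan is to derive the theorem as an immediate consequence of Lemmata~\ref{lem:coloring1} and~\ref{lem:coloring2} by a short proof by contradiction. First I would suppose, for contradiction, that there exist a constant $\epsilon \in (0,1]$ and an algorithm ${\cal A}$ meeting the theorem's hypotheses with $f(n) = o(n^{1-\epsilon}/\log n)$. I note that if $f$ is super-polynomial the claimed lower bound holds trivially, so without loss of generality I can assume $f$ is polynomial, which is exactly the setting in which Lemma~\ref{lem:coloring2} applies.

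Next I would invoke Lemma~\ref{lem:coloring2} with $g := f$ to obtain a player ${\cal P}_{\cal A}$ that wins the $(g',n)$-selective ring coloring game with constant probability, where $g'(n) = O(f(n) \cdot \log n)$. Substituting the assumed bound on $f$ gives $g'(n) = o(n^{1-\epsilon})$. On the other hand, Lemma~\ref{lem:coloring1} says every winning player for the selective ring coloring game must use $g'(n) = \Omega(n^{1-\epsilon'})$ exceptions for every constant $\epsilon' \in (0,1]$. Instantiating with $\epsilon' = \epsilon$ yields $g'(n) = \Omega(n^{1-\epsilon})$, which directly contradicts the $o(n^{1-\epsilon})$ bound just derived.

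Since the two lemmas already encapsulate the combinatorial (Linial-style chromatic bound) and simulation-based (online adaptive adversary construction and exception extraction) work, there is no real obstacle left at this stage: the argument is essentially a matching of the $\epsilon$ parameters and a factor of $\log n$ that is absorbed into the division on the theorem's right-hand side. The only point that requires any care is verifying that Lemma~\ref{lem:coloring2} can be applied, which is why I make the preliminary reduction to the polynomial-$f$ case; once that is in place, the contradiction is immediate.
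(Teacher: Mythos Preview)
Your proposal is correct and matches the paper's approach: the paper itself simply states that the theorem ``follows directly from Lemmata~\ref{lem:coloring1} and~\ref{lem:coloring2}'' without further detail, and your argument spells out exactly the intended combination of the two lemmas, including the absorption of the $\log n$ factor. The only cosmetic point is that your remark about the super-polynomial case is slightly misplaced inside the contradiction hypothesis (once you assume $f(n)=o(n^{1-\epsilon}/\log n)$, $f$ is already polynomially bounded), but this does not affect correctness.
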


As an immediate corollary to the above,
we note that the family of functions described by $\Omega(n^{1-\epsilon}/\log{n})$ is equivalent
to the family described by $\Omega(n^{1-\epsilon})$, allowing for the omission of
the $\log{n}$ divisor if desired in describing the bound.

\subsection{CDS Lower Bound}
\label{sec:active:cds}

We now prove the necessity of the geographic constraint for the CDS problem.
In particular, we prove that in the absence of this constraint, 
any CDS algorithm that guarantees a reasonable approximation now requires $\Omega(\sqrt{n}/\log{n})$ rounds.
This is worse than the $O(\log^3{n})$ solution that provides a $O(1)$-approximation that is possible with this constraint.
Unlike our lower bound in the previous section,
we do not use a reduction argument below.
We instead deploy the more traditional strategy of using the definition of a fixed algorithm to construct
 a network in which the algorithm performs poorly. 
% They key insight in this construction is that without geographic constraints, 
 %a clever adversary can prevent a significant fraction of nodes from receiving any messages in a $O(\sqrt{n}/\log{n})$-round execution.
 %With this in mind, we simulate how different nodes in different neighborhoods will behave (with respect to their CDS joining decisions)
 %in executions where they are isolated, and use these results to construct our network in such a way that will generate either a bad approximation factor
 %or a correctness violation. 

\begin{theorem}
Let ${\cal A}$ be an algorithm that solves the CDS problem in $f(n)$ rounds and provides a $o(\sqrt{n})$-approximation
in the online adaptive dual graph model with a network size of $n$, advance neighborhood knowledge, and no geographic constraint.
It follows that $f(n) = \Omega(\sqrt{n}/\log{n})$. 
\end{theorem}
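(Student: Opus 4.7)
The plan is to argue by contradiction: assume some algorithm ${\cal A}$ achieves an $o(\sqrt{n})$-approximation in $f(n) = o(\sqrt{n}/\log n)$ rounds, and construct a dual graph instance on which it fails. The adversarial setup closely mirrors that of Lemma~\ref{lem:coloring2}: the unreliable graph is $G' = K_n$, and the online adaptive adversary, based on the expected number of broadcasters $\mathbb{E}[B_r]$ in each round, either keeps all of $E' \setminus E$ present (so that multiple broadcasters collide everywhere and nobody receives anything) or drops all of $E' \setminus E$ (so that only nodes with a reliable-neighbor broadcaster can hear anything). As in the MIS argument, the total number of ``informed'' nodes---those that receive at least one message from a reliable neighbor---over the entire $f(n)$-round execution is $o(\sqrt{n})$ with high probability.

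The reliable graph $G$ is drawn from a family $\mathcal{F}$ of candidates in which the $n$ vertices are partitioned into $\sqrt{n}$ clusters of $\sqrt{n}$ vertices each, with an intra-cluster structure dense and vertex-transitive enough that a single node suffices for domination, and a single ``bridge'' edge joining each pair of consecutive clusters. The identity of the bridge endpoints is the parameter distinguishing members of $\mathcal{F}$, and for every choice of bridges the minimum CDS has size $\Theta(\sqrt{n})$, so an $o(\sqrt{n})$-approximation forces ${\cal A}$'s output size to be $o(n)$.

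I simulate ${\cal A}$ on an arbitrary baseline member of $\mathcal{F}$ under the adversary above, and for each uninformed cluster $V_i$ I extract the inclusion-probability profile $p_i(v) = \Pr[v \in S_i]$, where $S_i$ is the set of $V_i$-vertices in the output. Since uninformed clusters receive no messages, these probabilities are determined solely by the local (id plus reliable-neighbor-id) view, which is invariant under changes to the bridges belonging to other cluster boundaries. A case analysis then completes the argument: if $\sum_{v \in V_i} p_i(v) = \omega(\sqrt{n})$ for a constant fraction of the $\Omega(\sqrt{n})$ uninformed clusters, the expected output size is $\omega(n)$, violating the approximation bound; otherwise, for most uninformed cluster boundaries $(i, i{+}1)$, I can pick bridge endpoints with inclusion probability $o(1)$, so each such boundary has a CDS-missing bridge endpoint with constant probability. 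Independence of per-cluster random choices across $\Omega(\sqrt{n})$ boundaries then drives the overall disconnection probability to $1 - o(1)$, and a union bound over the two failure modes produces constant-probability failure of ${\cal A}$, contradicting its guarantees.

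The main obstacle will be arranging bridge endpoints to be locally indistinguishable from their cluster mates, since a single reliable bridge edge inevitably perturbs a cluster member's reliable-neighbor degree and would otherwise allow trivial identification from the advance neighborhood knowledge. I expect to address this by choosing the intra-cluster structure to be regular with a designated ``distinguished slot''---for instance, a clique with a single edge removed, where the missing edge is replaced by the bridge---so that every cluster vertex has exactly the same reliable-neighbor degree regardless of whether it is a bridge endpoint. The id assignment will additionally need to be randomized so that the bridge endpoint's identity, as seen through its neighbors' ids, is indistinguishable from that of a non-bridge vertex, preventing id-based short-circuiting without genuine communication.
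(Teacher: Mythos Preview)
Your high-level plan matches the paper: set $G'=K_n$, use the online-adaptive adversary that toggles on $\mathbb{E}[B_r]$ to ensure only $O(\log n)$ clusters receive a message per round, partition into $\sqrt n$ clusters of size $\sqrt n$, and run a dichotomy on inclusion probabilities (many high-probability nodes $\Rightarrow$ output too large; a low-probability cut vertex $\Rightarrow$ disconnection). The divergence is entirely in how you deal with the obstacle you flag at the end.

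Your case split says ``otherwise \ldots\ I can pick bridge endpoints with inclusion probability $o(1)$,'' but the inclusion probability $p_i(v)$ you extracted is a function of $v$'s \emph{view} $(\mathrm{id}(v),N(v))$, and changing whether $v$ is a bridge endpoint changes $N(v)$ and hence $p_i(v)$. So you cannot first compute the $p_i$'s and then choose bridges to minimize them: this is circular. Your proposed fix (clique-minus-an-edge with the bridge replacing the missing edge, plus randomized ids) does make the \emph{marginal} distribution of every vertex's view identical, but it does not remove the circularity for any fixed instance: once ids are fixed, the bridge endpoint's neighbor-id set is a specific set different from its clustermates', and ${\cal A}.out$ may assign it a high probability while assigning the others low probabilities. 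To push the random-id route through you would need an averaging argument over id assignments (not a ``pick the bad bridge'' argument), and you would have to handle the fact that the silent set is itself random and correlated with the ids; your sketch does not do this. Separately, your first case ``$\sum_{v\in V_i}p_i(v)=\omega(\sqrt n)$'' is vacuous since each cluster has $\sqrt n$ vertices with $p_i(v)\le 1$.

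The paper sidesteps the circularity with a different device. For each cluster it first computes, for every id $i$ in the cluster's point set $C_h'$, the value $p_i^h={\cal A}.out(i,\,C_h'\setminus\{i\},\,\emptyset)$---the join probability for the single canonical view ``I am $i$ and my neighbors are exactly the other point-set ids.'' It then chooses one of \emph{two} topologies for that cluster depending on these numbers: if every $p_i^h\ge 1/2$, make $C_h'$ a clique and attach one connector to a separate core node (so all non-connector nodes have exactly the canonical view and each joins with probability $\ge 1/2$); if some $p_i^h<1/2$, build a $(k{-}2)$-clique plus an extender plus the core, placing that low-probability id $i$ as the unique cut vertex whose neighbor set is precisely $C_h'\setminus\{i\}$ (so its actual join probability really is $p_i^h<1/2$). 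The clusters hang off a clique of core nodes rather than being chained by bridges. This adaptive two-topology construction is the key idea: it guarantees that the probabilities you computed in advance are the probabilities the relevant nodes actually use, with no circular dependence on the graph choice.
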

\begin{proof}
Assume for contradiction that there exists some algorithm ${\cal A}$ 
that achieves an $o(\sqrt{n})$-approximation in $f(n) \leq \sqrt{n}/(2\log{n})$ rounds with (at least) constant probability $p$.
Our proof proceeds in two steps.
During the first step, we use the definition of ${\cal A}$ to construct a challenging dual
graph network $(G_{\cal A},G_{\cal A}')$ and assignment of ids to nodes in that network. The second step
describes and analyzse
 an online adaptive adversary that causes ${\cal A}$, with sufficiently high probability,
  to either violate correctness or produce (at best) an $\Omega(\sqrt{n})$-approximation
of the minimum CDS when run in this network with these id assignments. This yields the needed contradiction.

Beginning with the first step, we fix $k=\sqrt{n}$ (assume for simplicity that $\sqrt{n}$ is a whole number,
the proof easily extends to fractional values, but at the expense of increased notational cluttering).
To construct our dual graph $(G_{\cal A},G_{\cal A}')$, we first fix $G_{\cal A}'$ to be the complete graph
over all $n$ nodes. (It is here we potentially violate the geographic constraint.)
To define $G_{\cal A}$, 
we partition the set $I=[n]$ of unique ids from $1$ to $n$
into sets $C_1, C_2,...,C_{k}$ of size $k$. 
We will now create a
 subgraph of size $k$ in $G_{\cal A}$ for each $C_h$ and assign ids from $C_h$ to these nodes. 
 In particular, for each id partition $C_h$,
let $i_0\in C_h$ be the smallest id in $C_h$.
We add a node to $G_{\cal A}$ and assign it id $i_0$.
We call this the {\em core} node for $C_h$.
Moving forward in our process, let $C_h' = C_h \setminus \{i_0\}$.
We call $C_h'$ a {\em point set}.

We must now add nodes corresponding to the ids in point set $C_h'$.
To do so, 
%
%we first define a {\em single node execution} to be an execution of ${\cal A}$ running
%on a single node for some specified set of neighbors and message receive behavior.
%
%
for each $i \in C_h'$, 
we define $p^h_i$ to be the probability 
that $i$ joins the CDS
as defined by the function ${\cal A}.out$ applied to id $i$, neighbor set $C_h' \setminus \{i\}$,
and an empty message history (see Section~\ref{sec:model}).
We call each such $p^h$ value a {\em join probability}.
How we add nodes to the graph associated with point set $C_h'$ depends
on the join probability values. In more detail, we consider two cases:

{\em Case $1$: $\forall i\in C_h': p^h_i \geq 1/2$.}
In this case, we add a clique of size $k-1$ to the graph.
We then assign the ids in $C_h'$ to nodes in this clique arbitrarily.
Finally, we
choose one $i\in C_h'$ to act as a {\em connector},
and connect the node with this id to the core node for $C_h$ that we previously identified.
Notice, the neighbor set for $i$ is different now than it was when we calculated $p^h_i$,
but for all other nodes with ids in $C_h'$, the neighbor sets are the same.

{\em Case $2$: $\exists i\in C_h': p^h_i < 1/2$.}
In this case,
we add a clique of size $k-2$ to the graph,
then add an edge from a single {\em connector} node in the clique
to a new node, that we call the {\em extender}, then connect
the extender to our previously identified core node for this set.
Let $i$ be the id from $C_h'$ for which the property that defines this case
holds.
We assign this id to the connector. We then assign
the ids from $C_h' \setminus \{i\}$
to the clique and extender nodes arbitrarily. 
Notice, in this case, the node with the id $i$ is the {\em only}
id in $C_h'$ for which its neighbor set is the same
here as it was when its join probability was calculated.

We repeat this behavior for every set $C_h$, $h\in [k]$.
Finally, to ensure our graph is connected,
we add edges between all $\ell$ core nodes to form a clique.

Having now used the definition of ${\cal A}$ to define a specific
reliable link graph $G_{\cal A}$, and an id assignment to this graph,
 consider the behavior of ${\cal A}$ when executed
 in $(G_{\cal A}, G_{\cal A}')$, with this specfied id assignment, and an online adaptive
 adversary that behaves as follows in each round $r$.
 By definition, the adversary knows the probability that each node in the network
 will broadcast in this round, so it can therefore calculate $\mathbb{E}[B_r]$,
 the expected value of $B_r$, the actual number of broadcasters in round $r$.
 If $\mathbb{E}[B_r] \geq b \log{n}$, for a constant $b>0$ we will fix later,
 then the adversary includes all edges in the network for $r$,
 and otherwise it includes {\em no} extra edges from $G_{\cal A}' \setminus G_{\cal A}$.
 
 Notice, this is the same online adaptive adversary strategy we used
 in the proof of Lemma~\ref{lem:coloring2},
 and as in that proof, a standard Chernoff analysis tells us
 that for any 
 constant $c\geq 1$,
 there exists a constant $b$
 that guarantees that with probability at least $1-n^{-c}$,
 in any round in which more than $\log{n}$ nodes broadcast,
 all edges from $G_{\cal A}'$ are included in the network by the adversary.
 If we combine this property with the observation that no node in
 $G_{\cal A}$ neighbors more than one set $C_h'$ (by ``neighbors $C_h$" we mean
 neighbors at least one node in $C_h$),
 it follows that with this same high probability
 no more than $\log{n}$ point sets 
 include a node that receives a message in any given round.

At this point, we remind ourselves of our
assumption  that $f(n) \leq \sqrt{n}/(2\log{n})$.
If our communication bound from above holds,
it would then hold that in $f(n)$ rounds, at least half of the $\sqrt{n}$ 
point sets received no messages.
Moving forward, assume this property holds. 
Let us consider what will happen when the nodes
in these {\em silent} point sets decide whether or not to join
the CDS by using the probabilities specified by ${\cal A}.out$
applied to their neighborhood ids and an empty message history.

There are two possibilities.
The first possibility is that half or more
of these silent points sets 
fell under Case $1$ from our above procedure.
For each such point set, 
there are $\sqrt{n}-2$ nodes
that will now join the CDS with probability at least $1/2$ (i.e., the nodes in $C_h'$
with the exception of the connector).
The expected number of nodes that join from this point set is therefore at least $\frac{\sqrt{n}-2}{2}$.
(Key in this result is the fact that these nodes are in silent sets, which means they have
received {\em no} messages, and therefore their behavior is based on an independent coin flip
weighted according to the probability returned by ${\cal A}.out$.)
Given that we have at least $\sqrt{n}/2$ such silent point sets,
the total expected number of nodes that join is in  $\Omega(n)$ (by linearity of expectation).
A Chernoff bound concentrates this expectation around the mean
and provides that with high probability in $n$, the total number of nodes that
join is within a constant factor of this linear expectation.

The second possibility is that half or more of these silent
point sets fall under Case $2$.
For each such silent point set, the connector node {\em does not join} 
with probability at least $1/2$.
Notice, if the connector does not join, then its point set is disconnected from
the rest of the network, and therefore, the overall CDS is not correct.
Because there are at least $\sqrt{n}/2$ silent point sets in this case that are violating
correctness with probability at least $1/2$,
the probability that this CDS is correct is exponentially small in $n$.

%Considering our two possibilities,
%it is clear that the second possibility must be quite rare as it almost certainly
%means the CDS is not correct.
%Formally, we note that the first possibility must therefore hold with at least very high probability.
%If the first possibility holds, however,
%then with high probability the CDS is at least a $\Omega(\sqrt{n})$-approximation
%of the minimum CDS (which, in this network, requires only $\Omega(\sqrt{n})$ nodes).

We are left to combine the probabilities of the relevant events.
We have shown that
with high probability, a $f(n)$-round execution of ${\cal A}$ in $(G_{\cal A},G_{\cal A}')$, with our above
adversary strategy, concludes with at least half of the point sets having received no messages.
If this event occurs, then there are two possibilities analyzed above concerning whether these silent
point sets mainly fall under Case $1$ or $2$ from our graph construction procedure.
We proved that the first possibility leads to a $\Omega(\sqrt{n})$-approximation with high probability
(as with this probability, $\Omega(n)$ nodes join in a network where $O(\sqrt{n})$ nodes is sufficient
to form a CDS), and the second possibility leads to a lack of connectivity with (very) high probability.
A union bound on either of these two events (many silent sets and bad performance given many silent sets) failing
yields a sub-constant probability. This probability, however, upper bounds the probability of the algorithm
satisfying the theorem. This provides our contradiction. 
\end{proof}

\section{The Necessity of Advance Neighborhood Knowledge}

In this section we explore the importance of advance neighborhood knowledge
by proving new lower bounds for the MIS and CDS problems when provided
 the weaker assumption of passive neighborhood knowledge.

\subsection{CDS Lower Bound}
\label{sec:passive:cds}

In this section, we prove that any CDS solution requires both the geographic constraint {\em and} advance neighborhood
knowledge. In more detail, we prove below that if we assume the geographic constraint but only passive neighborhood knowledge,
any CDS algorithm now requires either a slow time complexity or a bad approximation factor (formally, these two values must add
to something linear in the network size). Our bound reduces $k$-isolation, a hard guessing game, to the CDS problem.

%In this case, we will reduce our standard hitting game. To do so, we will produce
%a solution to the hitting game that simulates a CDS algorithm in our standard dual clique graph (which is easy
%to show satisfies the geographic constraints). 

\noindent {\bf The $k$-Isolation Game.} The game is defined for an integer $k>0$ and is played by
a player ${\cal P}$ modeled as a synchronous randomized algorithm.
At the beginning of the game, a {\em referee} chooses a target value $t \in [k]$ with uniform randomness.
The player ${\cal P}$ now proceeds in rounds.
In each round, the player can guess a single value $i\in [k]$ by sending it to referee.
If $i=t$, the player wins. Otherwise, it is told it did not win and continues to the next round.
Once again, the straightforward probabilistic structure of the game yields a straightforward bound:

\begin{lemma}
Fix some $k>1$ and $r\in [k]$. No player can win the $k$-isolation game in $r$ rounds with probability
better than $r/k$.
\label{lem:isolation}
\end{lemma}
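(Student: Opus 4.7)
\medskip

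\noindent\textbf{Proof Proposal.} The plan is to exploit the fact that the only feedback the player ever receives before its first correct guess is ``you did not win,'' which is a deterministic function of the guesses---independent of $t$. Consequently, up to the moment of a first correct guess, the transcript the player sees is a deterministic function of its random bits alone, so the sequence of guesses it makes is also a function of its random bits alone. First, I would argue without loss of generality that the player never repeats a guess: replacing a repeated guess with any previously unguessed value yields a strategy whose win probability is at least as large.

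\medskip

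Next, I would formalize the independence observation. Let $\rho$ denote the player's random bits and, for each $\rho$, let $S(\rho) \subseteq [k]$ be the set of distinct values the player would guess in rounds $1,\dots,r$ on the hypothetical transcript in which every feedback is ``wrong.'' By the previous paragraph and the no-repetition assumption, $|S(\rho)| \le r$. The key claim is that the player wins if and only if $t \in S(\rho)$: on the one hand, if $t \in S(\rho)$ then the player eventually guesses $t$ (the hypothetical transcript agrees with the real one up to that point); on the other hand, if $t \notin S(\rho)$, then the hypothetical and real transcripts coincide for all $r$ rounds and the player never guesses $t$.

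\medskip

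Since $t$ is drawn uniformly from $[k]$ and is independent of $\rho$, I would then conclude by a direct calculation:
\[
\Pr[\text{win}] \;=\; \Pr_{\rho,\,t}[\,t \in S(\rho)\,] \;=\; \mathbb{E}_{\rho}\!\left[\frac{|S(\rho)|}{k}\right] \;\le\; \frac{r}{k}.
\]

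\medskip

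\noindent The only mildly subtle point---which is really the crux of the argument---is the claim that the player's guess distribution in any round is independent of $t$ conditioned on not having yet guessed $t$; the no-repetition reduction and the deterministic ``wrong'' feedback together make this rigorous, and the rest is a one-line averaging argument.
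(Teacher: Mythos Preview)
Your proof is correct. The paper does not actually prove this lemma; it simply states that ``the straightforward probabilistic structure of the game yields a straightforward bound'' and then asserts the result. Your argument fills in exactly the details the paper leaves implicit: the key observation that ``wrong'' feedback carries no information about $t$ beyond what the player already knows, so the set of guesses is independent of $t$, and hence the win probability is at most $r/k$ by averaging over the uniform choice of $t$.
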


\noindent {\bf Connecting Isolation to the CDS Problem.}
We now reduce this isolation game to CDS construction.
To do so, we show how to use a CDS algorithm to construct an isolation
game player that simulates the algorithm in a barbell network (two cliques connected by a single edge) with the bridge
nodes indicating the target. 

%The simulation uses both broadcast behavior and the final CDS set as its guesses.
%We show that that these guesses must win the game (as any correct CDS contains the bridge nodes),
%and therefore, the the time complexity plus final CDS size (which determine the number of guesses),
%must be large to achieve a reasonable correctness probability.
%Due to space constraints, the below proof is deferred to the appendix.

\begin{theorem}
Let ${\cal A}$ be an algorithm 
that solves the CDS problem in $f(n)$ rounds and provides a $g(n)$-approximation
in the static dual graph model with a network of size $n$, passive neighborhood knowledge, and the geographic constraint.
It follows that $g(n) + f(n) = \Omega(n)$.
%
%at least $p(n)$
%that in all static geographic dual graph networks of size $n$
%to produce a CDS in $f(n)$ rounds that is a $g(n)$-approximation of the minimum CDS in the network.
%It follows that $p(n) \leq \frac{c(f(n)+g(n))}{n}$ for some constant $c\geq 1$.
\label{thm:cds}
\end{theorem}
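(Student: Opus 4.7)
The plan is to reduce the $k$-isolation game, for $k = \Theta(n)$, to the CDS problem. I will build a family of barbell networks indexed by $t \in [k]$, define an isolation player that simulates ${\cal A}$ on that family, and argue that the player wins with probability at least the CDS success probability $\Omega(1)$ while using at most $f(n) + 2g(n)$ isolation rounds. Combined with Lemma~\ref{lem:isolation}, this forces $f(n) + g(n) = \Omega(n)$.

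The network family works as follows. Let $A = \{a_1, \ldots, a_k\}$ and $B = \{b_1, \ldots, b_k\}$ carry the ids $[k]$ and $\{k+1, \ldots, 2k\}$ respectively. In the network $(G_t, G'_t)$ the reliable graph $G_t$ makes $A$ and $B$ each a clique and contains a single reliable cross-edge $(a_t, b_t)$; the unreliable graph $G'_t$ includes every cross-edge, and the static adversary activates all of them in every round. A short Euclidean placement argument (put $a_t$ and $b_t$ at distance exactly $1$; cluster the remaining $a_i$ tightly at distance just under $1$ from $a_t$ but strictly more than $1$ from $b_t$, and symmetrically for $B$) shows that the geographic constraint holds with $\gamma \leq 2$. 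The key structural fact is that, since every pair of nodes is in the communication topology in every round, a message is delivered only when a single node broadcasts network-wide, and a reliable-tagged cross-message can be produced only when that unique broadcaster is $a_t$ or $b_t$.

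Given the CDS algorithm ${\cal A}$, the player ${\cal P}_{\cal A}$ samples random bits $\rho$ for every simulated node and performs an ``ignorant'' simulation of ${\cal A}$ on $A \cup B$ in which every cross-message is tagged unreliable (equivalently, a simulation on a network with no bridge at all). In each simulated round $r \leq f(n)$, if exactly one node broadcasts and it is some $a_i$ or $b_{k+i}$, the player submits $i$ as its guess in isolation round $r$; otherwise it submits a default. After the $f(n)$ simulated rounds, for each $a_i$ or $b_{k+i}$ flagged as joining the simulated CDS, the player submits $i$ in a subsequent isolation round. Because the minimum CDS in a barbell has size two, the $g(n)$-approximate simulated CDS contains at most $2g(n)$ nodes, so ${\cal P}_{\cal A}$ uses at most $f(n) + 2g(n)$ isolation rounds in total. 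The correctness argument is a coupling between the ignorant simulation and the true execution of ${\cal A}$ on $(G_t, G'_t)$ under the same bits $\rho$: the two processes evolve identically until the first round $r^\ast$ in which $a_t$ or $b_t$ is the unique network-wide broadcaster, since before $r^\ast$ no reliable-tagged cross-message has been delivered and the passive tags therefore agree. If $r^\ast \leq f(n)$, the simulation at round $r^\ast$ also exhibits $a_t$ (or $b_t$) as its unique broadcaster (broadcasting decisions through round $r^\ast$ depend only on matching prior history), so the player guesses $t$ and wins; if $r^\ast > f(n)$, simulation equals reality throughout the execution, so the simulated and real CDS coincide, and correctness of the real CDS requires $a_t$ and $b_t$ to be in it, so the player guesses $t$ during the readout phase. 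Either way the player wins whenever ${\cal A}$ produces a correct CDS, so the player's success probability is $\Omega(1)$, and Lemma~\ref{lem:isolation} yields $f(n) + g(n) = \Omega(n)$.

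I expect the coupling step to be the main technical obstacle: one must verify carefully that, because the reliable tag on received messages is the only mechanism by which the choice of $t$ can leak into algorithm state, the two processes really do agree before $r^\ast$---including on the broadcasting schedule itself, which is determined recursively from each node's message history. The remaining ingredients---laying out the geographic embedding so that exactly one cross-clique edge sits in $G$ and bounding the simulated CDS by $2g(n)$ via the barbell's minimum CDS of size two---are routine geometric and combinatorial checks.
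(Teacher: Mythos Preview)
Your proposal is correct and takes essentially the same approach as the paper: both reduce from the $k$-isolation game by simulating ${\cal A}$ on a barbell network (two cliques joined by a single reliable bridge at the hidden index $t$, with the static adversary activating all cross-edges), guessing the index of any sole broadcaster during the $f(n)$ simulated rounds and then the indices of the output CDS nodes. The only cosmetic differences are that the paper uses two cliques of size $k/2$ (so $n=k$) and explicitly uses the isolation game's feedback to keep the simulation faithful round by round, whereas you phrase the same step as a coupling between the real execution and an ignorant simulation---arguably cleaner, but the content is identical.
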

%\iffalse
\begin{proof}
Assume for contradiction that there exists some ${\cal A}$ that solves the problem 
in this setting for some $f(n)+g(n) = o(n)$ with some constant probability $p$.
We will use ${\cal A}$ to construct a player ${\cal P}_{\cal A}$ for the $k$-isolation game
that contradicts the lower
bound of Lemma~\ref{lem:isolation}.
In more detail,
given a target $t$ for the isolation game, we define a graph $G_t$ of size $k$
that has two cliques of size $k/2$ connected by a single edge.\footnote{We assume for now that $k$ is even,
the proof is easily extended at the expense of extra notation to handle the odd case.}
The first (resp. second) clique is assigned ids from $1$ to $k/2$ (resp. $k/2 +1$ to $k$).
If $t<k/2$ (resp. $t\geq k/2$) we add the bridge from the node with id $t$ to the node with id $k/2 +t$
(resp. $t$ to $t-k/2$).

Our player ${\cal P}_{\cal A}$ simulates ${\cal A}$ running in $G_t$ with a complete $G'$ and the static
adversary that includes all $G'$ edges in every round. (Notice that this dual graph definition satisfies our geographic
constraint.)
The player, of course, does not know the full definition of $G_t$ in advance as it does not know $t$
in advance, but we will show that its guesses allow it to keep the simulation correct until it wins.
In more detail, in each round $r$ of the simulation, the player simulates the round up until the broadcast
decisions are made. Let $B_r$ be the set of the ids of the nodes that broadcast in $r$.
If $|B_r| = 0$ or $|B_r|>1$ then no node will receive any message (regardless of the identity of the bridge),
so the player can simply simulate silence for this round.
The interesting case is when $B_r=\{i\}$.
Here, the player will make two guesses (i.e., play two rounds of the game).
If $i \leq k/2$, it will guess $i$ then $k/2 + i$,
and if $i > k/2$, it will guess $i$ and $i-k/2$.
If $i$ is a bridge node, the player wins the game.
If $i$ is not a bridge node, 
then the player learns this (by not winning) and can simulate all nodes receiving the message
but only nodes in the same clique as the sender receiving a ``reliable" tag.
If after $f(k)$ rounds of simulation, the player has still not won the game,
it then guesses the id of every node that joins the CDS at the end of this round.

To conclude the proof, we note that a correct CDS contains the bridge nodes (to satisfy connectivity).
We also note that in an $f(k)$ round simulation the player makes at most $O(f(k)+g(k))$
guesses (two guesses per round, and one guess per node in a CDS that is within a factor of $g(k)$ of the constant
size minimum CDS for this network).
Therefore, if ${\cal A}$ satisfies our above assumptions,
${\cal P}_{\cal A}$ wins the $k$-isolation game with constant probability with $O(f(k)+g(k)) = o(k)$
guesses---contradicting Lemma~\ref{lem:isolation}.
\end{proof}
%\fi

\subsection{MIS Lower Bound}
\label{sec:passive:mis}

It is straightforward to show that the MIS algorithm from~\cite{censor:2011} still works if we replace
the advance neighborhood knowledge assumption with its passive alternative (the algorithm uses this knowledge
only to discard messages it receives from unreliable neighbors).
We prove below, however, that the passive assumption increases the fragility of any MIS solution.
In particular, we show that when we switch from advance to passive, the bound from Section~\ref{sec:active:mis}
now increases to $\Omega(n)$ and still holds even with a static adversary.
As before, we use a reduction argument from a hard guessing game.

\iffalse
The key insight behind this bound, and the CDS bound that follows,
is that for careful reliable graph constructions, it is sufficient for the adversary to include unreliable edges between
all nodes in all rounds. In such a setting, if the execution is sufficiently short, many nodes never learn anything about
their reliable neighborhood (as the complete graph of unreliable edges allows at most a single sender in the network to 
succeed per round, and nodes are not provided advance information about their neighborhood).
 Nodes without any knowledge of their neighborhood will subsequently have a hard time making 
a decision whether or not to join the relevant set. We capture this hardness by reduction from hard guessing games.
\fi

\noindent {\bf The $k$-Bit Revealing Game}
The game is defined for an integer $k> 0$
and is played by a player ${\cal P}$ modeled as a synchronous randomized algorithm.
At the beginning of the game, a {\em referee} generates a sequence $\kappa$
of $k$ bits, where each bit is determined with uniform and independent randomness.
In the following, we use the notation $\kappa[i]$, for $i\in [k]$, to refer to the $i^{th}$ bit in this sequence.
The player ${\cal P}$ now proceeds in rounds.
In each round, 
it can request a value $i\in [k]$,
and the adversary will respond by returning $\kappa[i]$.
At the end of any round (i.e., after the bit is revealed),
the player can decide to guess $\kappa$ by sending the referee a sequence $\hat \kappa$ of $k$ bits.
If $\hat\kappa = \kappa$, the player wins; otherwise it loses.
We say a player ${\cal P}$ solves the $k$-bit revealing game in $f(k)$ rounds with probability $p$,
if with probability $p$ it wins the game by the end of round $f(k)$.
Given the well-behaved probabilistic structure of this game, the following
bound is straightforward to establish.

\begin{lemma}
Fix some $k>1$ and $t\in [k]$.
No player can solve the $k$-bit revealing game in $t$ rounds with probability $p > 2^{-(k-t)}$.
\label{lem:mis:bit}
\end{lemma}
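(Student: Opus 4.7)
The plan is an information-theoretic argument: after $t$ queries the player has seen at most $t$ of the $k$ independent uniform bits of $\kappa$, so the remaining $\geq k-t$ bits are still uniform and independent of its view, and any guess matches them with probability at most $2^{-(k-t)}$.

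First I would reduce to deterministic strategies by conditioning on the player's internal coins; since the randomized winning probability is an average of deterministic winning probabilities, it suffices to bound the deterministic case. Fix such a strategy. Each query $i_r$ is a deterministic function of the bits revealed before round $r$, so the transcript $\tau = (i_1,\kappa[i_1],\ldots,i_t,\kappa[i_t])$ and the final guess $\hat\kappa$ are deterministic functions of $\kappa$ alone. Let $Q(\tau) \subseteq [k]$ be the set of positions queried along $\tau$; by construction $|Q(\tau)| \leq t$. To win, $\hat\kappa$ must agree with $\kappa$ on every position in $[k] \setminus Q(\tau)$, a set of size at least $k-t$. The heart of the argument is to show that for each fixed transcript $\tau$, the conditional distribution of $(\kappa[j])_{j \notin Q(\tau)}$ given $\tau$ is uniform on $\{0,1\}^{k - |Q(\tau)|}$; the bound $2^{-(k-t)}$ then follows by summing the contribution of each $\tau$. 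Since the player may also guess before round $t$, I would observe that guessing after fewer rounds only decreases the success probability (the exponent $k-s$ is larger for $s < t$), so we lose nothing by assuming the guess is made at the end of round $t$.

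The main subtlety, and the step I would write out with care, is the conditional-independence claim. Because $Q$ itself depends on $\kappa$ through the adaptive queries, one cannot just assert ``the unqueried bits are uniform''; the correct formulation conditions on a specific attainable transcript $\tau$, which simultaneously pins down $Q(\tau)$ and the values $\kappa[i]$ for $i \in Q(\tau)$. I would prove by induction on the round number that after any attainable prefix of the transcript, the bits at positions outside the prefix's query set are still i.i.d.\ uniform. Given the mutual independence of the bits of $\kappa$ at the outset, this induction is routine, and it plugs directly into the final calculation to yield the stated bound.
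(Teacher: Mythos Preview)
Your proposal is correct; the paper itself does not give a proof of this lemma beyond the remark that ``given the well-behaved probabilistic structure of this game, the following bound is straightforward to establish.'' Your argument is exactly the kind of standard information-theoretic reasoning the paper is implicitly invoking, and you have correctly identified and handled the one non-trivial point (that the set of queried positions is itself adaptive, so the conditional-uniformity claim must be stated relative to a fixed attainable transcript).
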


\noindent {\bf Connecting Bit Revealing to the MIS Problem.}
We now reduce our bit revealing game to the more complex task of building
an MIS in a non-geographic static dual graph network.
In particular, we will show how to use an MIS algorithm to solve the bit revealing
game by having a player simulate the algorithm in  a carefully constructed dual graph
network.
In this network,
we partition nodes into sets, such that we can match these sets
to bits, and use the MIS decisions of nodes in a given set to guess the corresponding bit in the revealing game.
%Due to space constraints, the below proof is deferred to the appendix.

%{\bf Note:} {\em To make this slightly more general and strong, perhaps generalize to say that to finish in $f(n) < n/6 - \ell$ 
%implies the success probability is no better than $2^{-(\ell-1)}$. }
%
%{\bf Note \#2:} {\em Both here and for the CDS bound, the alternative is to fix a definition of what it means to solve
%the problem and just show that you need $\Omega(n)$ rounds for some precise fraction of $n$. The success
%probability does not show up in the theorem statement as it is subsumed in the definition of {\em solve}. THis is {\bf the question}: 
%do I want to give the most general possible statements, or just talk, with simpler statements? }

\begin{theorem}
Let ${\cal A}$ be an algorithm 
that solves the MIS problem in $f(n)$ rounds in the static dual graph model with a network of size $n$, passive neighborhood knowledge, and no geographic constraint.
It follows that $f(n) = \Omega(n)$. 
\label{thm:mis}
\end{theorem}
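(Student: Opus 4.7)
The plan is to follow the reduction template of Theorem~\ref{thm:cds}, but now reducing from the $k$-bit revealing game of Lemma~\ref{lem:mis:bit}. Assume for contradiction that some algorithm ${\cal A}$ solves the MIS problem with constant probability $p$ in $f(n) = o(n)$ rounds. I would set $k = \lfloor n/2\rfloor$ and, given a bit string $\kappa \in \{0,1\}^k$ chosen by the referee, define a reliable graph $G_\kappa$ by partitioning the $n$ ids into $k$ disjoint pairs $P_i = \{u_i, v_i\}$ and inserting the edge $\{u_i, v_i\}$ into $G_\kappa$ iff $\kappa[i] = 1$ (leaving the pair isolated otherwise). Let $G'$ be the complete graph on $[n]$ and let the static adversary include every edge of $G'\setminus G_\kappa$ in every round, so that the effective per-round topology is always $K_n$. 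The payoff of this encoding is that any correct MIS on $G_\kappa$ exactly reveals $\kappa$: an isolated pair ($\kappa[i]=0$) forces both nodes into the set, while an edge pair ($\kappa[i]=1$) forces exactly one.

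The player ${\cal P}_{\cal A}$ would simulate ${\cal A}$ round-by-round on $(G_\kappa, G')$, generating each node's random bits itself. The key structural fact is that under this adversary, any round with two or more broadcasters produces collisions everywhere, and any round with a unique broadcaster $x$ delivers $x$'s message to every other node with a ``reliable'' passive tag attached only at $x$'s pair-mate; hence the only tags whose values depend on $\kappa$ are those within a single pair. Whenever the simulation reaches a unique-broadcaster round whose broadcaster lies in some $P_i$ with $\kappa[i]$ not yet known, the player queries $\kappa[i]$ from the referee, which lets it compute the correct tag for the pair-mate and continue the simulation faithfully. After $f(n)$ rounds the player reads the simulated MIS output and submits $\hat\kappa$ with $\hat\kappa[i]=0$ if both nodes of $P_i$ were output as members and $\hat\kappa[i]=1$ otherwise.

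The step I expect to require the most care is verifying that the simulation remains faithful even though the player does not know $\kappa$ in advance. The invariant to establish is that before $\kappa[i]$ is queried, no simulated node in $P_i$ has ever received a message whose tag depends on $\kappa[i]$, because the only way such a message could reach a member of $P_i$ is in a unique-broadcast round whose broadcaster lies in $P_i$, which would itself have triggered the query; consequently, the unknown bit never influences an earlier broadcast decision or tag computation. Granting this invariant, the player makes at most one new query per simulated round, hence at most $f(n)$ queries overall, and wins the bit revealing game whenever ${\cal A}$ outputs a correct MIS, which happens with probability at least $p$. Lemma~\ref{lem:mis:bit} then gives $p \leq 2^{-(\lfloor n/2\rfloor - f(n))}$, forcing $f(n) \geq \lfloor n/2\rfloor - \log_2(1/p) = \Omega(n)$ and contradicting $f(n) = o(n)$.
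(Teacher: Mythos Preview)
Your proposal is correct and follows essentially the same reduction as the paper: both simulate ${\cal A}$ under a complete $G'$ with the static adversary, query one bit of $\kappa$ per unique-broadcaster round, and decode $\kappa$ from the MIS output. The only difference is the gadget encoding each bit---the paper attaches to each of $k$ anchor nodes a $5$-node set arranged as a line ($\kappa[i]=0$) or a clique ($\kappa[i]=1$), whereas you use disjoint pairs with or without an edge; your encoding is simpler, yields a slightly better constant ($k\approx n/2$ versus $k=n/6$), and works for the same reason, though note that your $G_\kappa$ is disconnected while the paper's is connected (harmless here since the MIS problem statement imposes no connectivity requirement on $G$).
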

%%%%%%%%%%%%%%%%%%%%%%%%%
% ----------------------------------------------------------------
% ----------------------------------------------------------------
% ----------------------------------------------------------------
% ----------------------------------------------------------------
% ----------------------------------------------------------------
%%%%%%%%%%%%%%%%%%%%%%%%%
% Moved to Appendix
%%%%%%%%%%%%%%%%%%%%%%%%%
% ----------------------------------------------------------------
% ----------------------------------------------------------------
% ----------------------------------------------------------------
% ----------------------------------------------------------------
% ----------------------------------------------------------------
%%%%%%%%%%%%%%%%%%%%%%%%%
%\iffalse
\begin{proof}
Assume for contradiction that there exists some ${\cal A}$ that solves the problem in $f(n) = o(n)$ rounds with some constant
probability $p$.
We will use ${\cal A}$ to construct a player ${\cal P}_{\cal A}$ for the $k$-bit revealing game that contradicts the lower
bound of Lemma~\ref{lem:mis:bit}.

In more detail, for a given bit string $\kappa$ of length $k$, we define a graph $G_{\kappa}$ of
size $6k$ as follows. 
Start with a line of $k$ nodes. Call each node an {\em anchor} and label them (for the sake of this
construction process), $1,2,3,...,k$.
Partition the remaining $5k$ nodes into sets.
For anchor $i$, if $\kappa[i]=0$,
then arrange the nodes in set $i$ into a line and connect anchor $i$ to an endpoint of this line.
On the other hand, if $\kappa[i] =1$,
then arrange the nodes in set $i$ into a clique and connect anchor $i$ to a single node in this clique.

In a given instance of the $k$-bit revealing game with target bit string $\kappa$,
our player ${\cal P}_{\cal A}$ works by simulating ${\cal A}$ for $n=6k$ in
the dual graph consisting of $G = G_{\kappa}$ and $G'$ as the complete graph over all $n$ nodes.
(Notice, for sufficiently large $k$,
this network does not satisfy our geographic constraints for dual graph networks.)
In this simulation, the player assumes that every edge in $G'$ is included in the network in every round.
Our player will simulate one round of ${\cal A}$ for each round of the game,
using the behavior of the nodes in the simulation to help determine its guess,
and using the response to the guess to help keep its simulation correct. 
When the simulation terminates after $f(n)=f(6k)$ rounds,
the player will generate a guess $\hat \kappa$ based on the MIS generated in the simulation.
We will prove that if ${\cal A}$ outputs a correct MIS in this simulation then ${\cal P}_{\cal A}$ 
will correctly guess $\kappa$.
Of course, ${\cal P}_{\cal A}$ does not know $\kappa$ in advance and therefore does not
know the definition of $G_{\kappa}$ in advance. We will show, however,
that this does not matter as we will keep our simulation valid for the unknown $G_{\kappa}$ in every step.

In more detail, each round of $r$, $1 \leq r \leq f(6k)$ of the player's simulation work as follows.
Let $B_r$ be the set of nodes that broadcast in $r$.
If $|B_r| = 0$ or $|B_r| > 1$ then in our simulated network it is clear to see that
no node receives a message (recall that $G'$ is the complete graph and all edges are included). 
The player, therefore, can simulate the nodes
in $B_r$ broadcasting and no node receiving any message.
It can then make any arbitrary request in this round of the bit revealing game and ignore the answer.

The interesting case is where $B_r=\{u\}$.
Let $i$ correspond to $u$'s position in $G_{\kappa}$ (i.e., $u$ is either anchor $i$ or in set $i$).
To correctly simulate receive behavior we must know if set $i$ is arranged in a clique or a line.
To determine this, we have the player request $\kappa[i]$ from the referee.
The value of $\kappa[i]$ allows the player to determine the structure of set $i$ in $G_{\kappa}$ and
therefore correctly simulate receive behavior. (Notice, in this case, the question of {\em who} receives
the message is straightforward, as given our definition of $G'$ and the static adversary, {\em all} nodes will receive it.
The unknown solved by learning $\kappa[i]$ is which nodes will receive the message
with $u$ {\em labelled as a $G$-neighbor}.)

Finally, after $f(6k)$ rounds, ${\cal A}$ has each node output $1$ to indicate it is in the MIS
and $0$ to indicate it is not. We have our player uses these MIS decisions to generate
its guess $\hat \kappa$ for the bit revealing game.
In particular, for each bit position $i$, we determine $\hat \kappa[i]$ based on the output of nodes in set $i$ in our
simulation as follows:
(1) if  $\leq 1$ node in set $i$ outputs $1$ then set $\hat \kappa[i] \gets 1$;
(2) if  $\geq 2$ nodes in set $i$ outputs $1$ then set $\hat \kappa[i] \gets 0$.
We claim that if ${\cal A}$ outputs a correct MIS in its network then $\hat \kappa = \kappa$.
To see why, notice that a correct MIS cannot have more than $1$ node in a given set $i$
if set $i$ is a clique (as this would generate an independence violation), and it cannot have less 
than $2$ nodes in a given set $i$ if set $i$ is a line (as a line of length $5$ requires at least $2$ MIS nodes to satisfy maximality,
even if anchor $i$ is in the MIS as well). 

Pulling together the pieces, we see that if ${\cal A}$ solves the MIS problem in $f(6k) =o(k)$ rounds
with constant probability $p$,
then it solves the $k$-bit revealing game in $o(k)$ rounds also with probability at least $p$.
By Lemma~\ref{lem:mis:bit}, however, 
$\Omega(k)$ rounds are needed for this success probability. A contradiction.
%
%at least $k-\log_2{(1/p)} = k-O(1) = \Omega(k)$ rounds is  needed for this success probability. A contradiction.
\end{proof}
%\fi

%Next we might consider the {\bf maximal independent set} problem.
%Use the same assumption from CDS about knowing if a message came from $G$ or $G'$ neighbors.
%In terms of lower bounds, these certainly exist for the hard adversaries and no geogarphy. Consider a line with complete $G'$.
%Only one MIS node can be formed per round.
%What about geographic $G'$ and hard adversaries? I think our algorithm from PODC 2011 works (which is a cool result
%as it is a different split w.r.t.~georgraphy than for other problems).
%What about the oblivious adversary? We do not need to consider oblivious with geography as we already
%have that result from PODC 2011 algorithm.
%What is really interesting, then, is non-goegraphic. Is this as hard as the harder adversaries without
%geography, or can we leverage obliviousness to solve the problem faster?
%(We cannot necessarily use the line network directly here as that argument is hard if we do not know in advance
%who has the message and who does not.)

%%%%%%
%\shortOnly{\newpage}
\newpage
\bibliographystyle{plain}
  \bibliography{nsf}

\begin{thebibliography}{10}

\bibitem{baryehuda:1987}
R.~Bar-Yehuda, O.~Goldreich, and A.~Itai.
\newblock {On the Time Complexity of Broadcast in Radio Networks: an
  Exponential Gap Between Determinism and Randomization}.
\newblock In {\em Proceedings of the ACM Conference on Distributed Computing},
  1987.

\bibitem{baryehuda:1992}
Reuven Bar-Yehuda, Oded Goldreigch, and Alon Itai.
\newblock {On the Time-Complexity of Broadcast in Multi-Hop Radio Networks: An
  Exponential Gap between Determinism and Randomization}.
\newblock {\em Journal of Computer and System Sciences}, 45(1):104--126, 1992.

\bibitem{censor:2011}
K.~Censor-Hillel, S.~Gilbert, F.~Kuhn, N.~Lynch, and C.~Newport.
\newblock {Structuring Unreliable Radio Networks}.
\newblock In {\em Proceedings of the ACM Conference on Distributed Computing},
  2011.

\bibitem{chlamtac:1985}
I.~Chlamtac and S.~Kutten.
\newblock {On Broadcasting in Radio Networks--Problem Analysis and Protocol
  Design}.
\newblock {\em IEEE Transactions on Communications}, 33(12):1240--1246, 1985.

\bibitem{clementi:2004}
A.~Clementi, A.~Monti, and R.~Silvestri.
\newblock {Round Robin is Optimal for Fault-Tolerant Broadcasting on Wireless
  Networks}.
\newblock {\em Journal of Parallel and Distributed Computing}, 64(1):89--96,
  2004.

\bibitem{daum:2012}
S.~Daum, F.~Kuhn, and C.~Newport.
\newblock {Efficient Symmetry Breaking in Multi-Channel Radio Networks}.
\newblock In {\em Proceedings of the International Conference on Distributed
  Computing}, 2012.

\bibitem{colton:2006}
Martin Farach-Colton, Rohan~J. Fernandes, and Miguel~A. Mosteiro.
\newblock {Lower Bounds for Clear Transmissions in Radio Networks}.
\newblock In {\em Proceedings of the Latin American Symposium on Theoretical
  Informatics}, 2006.

\bibitem{ghaffari:2012}
M.~Ghaffari, B.~Haeupler, N.~Lynch, and C.~Newport.
\newblock {Bounds on Contention Management in Radio Networks}.
\newblock In {\em Proceedings of the International Conference on Distributed
  Computing}, 2012.

\bibitem{ghaffari:2013}
Mohsen Ghaffari, Nancy Lynch, and Calvin Newport.
\newblock {The Cost of Radio Network Broadcast for Different Models of
  Unreliable Links}.
\newblock In {\em Proceedings of the ACM Conference on Distributed Computing},
  2013.

\bibitem{jurdzinski:2002}
T.~Jurdzinski and G.~Stachowiak.
\newblock {Probabilistic Algorithms for the Wakeup Problem in Single-Hop Radio
  Networks}.
\newblock In {\em Proceedings of the International Symposium on Algorithms and
  Computation}, 2002.

\bibitem{kuhn:2010b}
F.~Kuhn, N.~Lynch, C.~Newport, R.~Oshman, and A.~Richa.
\newblock {Broadcasting in Unreliable Radio Networks}.
\newblock In {\em Proceedings of the ACM Conference on Distributed Computing},
  2010.

\bibitem{kuhn:2009}
Fabian Kuhn, Nancy Lynch, and Calvin Newport.
\newblock {Brief Announcement: Hardness of Broadcasting in Wireless Networks
  with Unreliable Communication}.
\newblock In {\em Proceedings of the ACM Conference on Distributed Computing},
  2009.

\bibitem{kuhn:2004}
Fabian Kuhn, Thomas Moscibroda, and Rogert Wattenhofer.
\newblock {Initializing Newly Deployed Ad Hoc and Sensor Networks}.
\newblock In {\em Proceedings of the International Conference on Mobile
  Computing and Networking}, 2004.

\bibitem{kuhn:2003}
Fabian Kuhn, Rogert Wattenhofer, and Aaron Zollinger.
\newblock {Ad-Hoc Networks Beyond Unit Disk Graphs}.
\newblock In {\em Proceedings of the International Workshop on the Foundations
  of Mobile Computing}, 2003.

\bibitem{linial:1992}
Nathan Linial.
\newblock {Locality in Distributed Graph Algorithms}.
\newblock {\em SIAM Journal on Computing}, 21(1):193--201, 1992.

\bibitem{moscibroda:2005}
Thomas Moscibroda and Roger Wattenhofer.
\newblock {Maximal Independent Sets in Radio Networks}.
\newblock In {\em Proceedings of the ACM Conference on Distributed Computing},
  2005.

\bibitem{moscibroda:2006}
Thomas Moscibroda and Roger Wattenhofer.
\newblock {The Complexity of Connectivity in Wireless Networks}.
\newblock In {\em Proceedings of the IEEE International Conference on Computer
  Communications}, 2006.

\bibitem{newport:2007}
Calvin Newport, David Kotz, Yougu Yuan, Robert~S Gray, Jason Liu, and Chip
  Elliott.
\newblock {Experimental Evaluation of Wireless Simulation Assumptions}.
\newblock {\em Simulation}, 83(9):643--661, 2007.

\bibitem{betafactor}
K.~Srinivasan, M.A. Kazandjieva, S.~Agarwal, and P.~Levis.
\newblock {The $\beta$-Factor: Measuring Wireless Link Burstiness}.
\newblock In {\em Proceedings of the Conference on Embedded Networked Sensor
  System}, 2008.

\end{thebibliography}

%%%%%%%%%%%%%%%%%%%%%%%%%%%%%%%%%%%%%%%%%%%%%%

%\shortOnly{

\end{document}